\documentclass[journal,10pt,twoside]{IEEEtran}
\IEEEoverridecommandlockouts

\usepackage{cite}

\usepackage{amsmath,amssymb,amsfonts}
\usepackage{amsthm}
\usepackage{bm}
\usepackage{mathrsfs}

\usepackage{algorithmic}
\usepackage[ruled,linesnumbered]{algorithm2e}

\usepackage{graphicx}
\usepackage{xcolor}
\usepackage{epstopdf}
\graphicspath{{./Figure/}{./Figure/results/}}

\usepackage{textcomp}
\usepackage{enumitem}
\usepackage{url}
\usepackage{verbatim}
\usepackage{multirow}
\usepackage{booktabs}
\usepackage{makecell}
\usepackage{balance}
\usepackage{upgreek}
\usepackage{type1cm}  
\usepackage[T1]{fontenc} 

\usepackage{subcaption}
\usepackage{array}
\usepackage{stfloats}
\usepackage{cuted} 
\usepackage{float}
\usepackage{hyperref}
\hypersetup{
    colorlinks=true,
    linkcolor=blue,
    filecolor=magenta,      
    urlcolor=cyan,
    citecolor=blue,
    }

\newcommand{\herm}{^{\mathsf{H}}}
\newcommand{\trans}{^{\mathsf{T}}}

\DeclareMathOperator{\diag}{\mathsf{diag}}

\allowdisplaybreaks

\makeatletter
\renewcommand{\maketag@@@}[1]{\hbox{\m@th\normalsize\normalfont#1}}%
\makeatother

\newtheorem{theorem}{Theorem}

\title{STAR-RIS-Assisted Integrated Sensing, Secure Communication, and Power Transfer: A Transmit Power Minimization Framework}
\author{Ling He, 
Vaibhav Kumar,  \IEEEmembership{Member, IEEE}, \\
Yingyang Chen, \IEEEmembership{Senior Member, IEEE},
Miaowen Wen, \IEEEmembership{Senior Member, IEEE},\\
Christina P\"opper, \IEEEmembership{Senior Member, IEEE},
and 
Marwa Chafii, \IEEEmembership{Senior Member, IEEE}
\thanks{The work of Vaibhav Kumar, Christina P\"opper, and Marwa Chafii was supported by the Center for Cyber Security through New York University Abu Dhabi Research Institute under Award G1104. An earlier version of this paper was published in part in the Proceedings of the IEEE International Conference on Communications (ICC 2025), Montreal, Canada, in 2025~\cite{25-ICC-SRIS}.}
\thanks{Ling He is with the School of Electronic and Information Engineering, South China University of Technology, Guangzhou 510640, China, and also with the Wireless Research Lab, Engineering Division, New York University Abu Dhabi (NYUAD), UAE (e-mail: eelinghe@mail.scut.edu.cn).}
\thanks{Vaibhav Kumar and Marwa Chafii are with the Wireless Research Lab, Engineering Division, New York University (NYU) Abu Dhabi, UAE. Marwa Chafii is also with NYU WIRELESS, NYU Tandon School of Engineering, New York, USA (e-mail: vaibhav.kumar@ieee.org; marwa.chafii@nyu.edu).}
\thanks{Yingyang Chen is with the College of Information Science and Technology, Jinan University, Guangzhou 510632, China (e-mail: chenyy@jnu.edu.cn).}
\thanks{Miaowen Wen is with the School of Electronic and Information Engineering and the Guangdong Provincial Key Laboratory of Short-Range Wireless Detection and Communication, South China University of Technology, Guangzhou 510640, China (e-mail: eemwwen@scut.edu.cn).}
\thanks{Christina P\"opper is with the Cyber Security \& Privacy (CSP) Lab, Center for Cyber Security, Science Division, New York University Abu Dhabi (NYUAD), UAE (e-mail: christina.poepper@nyu.edu).}
}

\begin{document}
\maketitle

\begin{abstract}
Evolving wireless networks call for architectures that unify sensing, communication, and wireless power transfer. Although integrated sensing and communication (ISAC) and simultaneous wireless information and power transfer (SWIPT) have validated dual-function transmission, the combination of integrated sensing, secure communication, and power transfer (ISSCPT) remains largely unexplored, in part due to the tight coupling among design variables. To address this coupling and expand spatial degrees of freedom, we turn to intelligent metasurfaces: while a conventional reconfigurable intelligent surface (cRIS) reflects only to one side and thus limits coverage and flexibility, a simultaneously transmitting and reflecting RIS (STAR-RIS) enables full-space wave control, making it a natural vehicle for power-efficient ISSCPT. We study a STAR-RIS-assisted ISSCPT system and pose a central question: \emph{How much transmit power is required to operate such a system?} We formulate a transmit-power minimization problem that jointly optimizes transmit and receive beamforming and the STAR-RIS configuration, and solve it via alternating optimization with successive convex approximation, second-order cone programming, and eigenvalue decomposition. Simulations show that the proposed STAR-RIS-assisted design outperforms cRIS and no-RIS baselines, and quantify the additional transmit power required by ISSCPT relative to ISAC and secure SWIPT, clarifying security–sensing–power tradeoffs in metasurface-assisted systems.

\end{abstract}

\begin{IEEEkeywords}
Integrated sensing, secure communication, and power transfer (ISSCPT), simultaneously transmitting and reflecting reconfigurable intelligent surface (STAR-RIS), physical-layer security, beamforming design, energy harvesting, successive convex approximation (SCA).  
\end{IEEEkeywords}

\newpage 
\section{Introduction}
\IEEEPARstart{T}{he} evolution of wireless networks from fifth-generation (5G) systems to beyond-5G (B5G) and sixth-generation (6G) paradigms is primarily driven by the growing demand for diverse, latency-sensitive, and computationally intensive applications. While 5G focuses on three verticals, namely enhanced mobile broadband (eMBB), massive machine-type communications (mMTC), and ultra-reliable low-latency communications (URLLC), ongoing research on 5G-Advanced (5G-A) emphasizes experience, expansion, extension, and excellence~\cite{23-COMST-6G, 25-ComStdMag-6gBridge}.
Emerging applications include virtual/augmented reality, sub-10~cm positioning, smart grid control, industrial automation, real-time financial transactions, non-terrestrial networks, drone connectivity, broadband access in underserved regions, and the gradual integration of artificial intelligence and machine learning (AI/ML) as enabling technologies. However, enabling these applications demands network architectures that offer ultra-high data rates, precise localization, robust sensing capabilities, and high energy efficiency.

To address these combined demands, early research efforts have explored integrated solutions such as simultaneous wireless information and power transfer (SWIPT)~\cite{18-COMST-SWIPT}, which enables the concurrent delivery of data and energy to different devices. Another promising approach is integrated sensing and communication (ISAC)~\cite{22-CommMag-MultiFunctional6G_ISAC}, which unifies communication and sensing functionalities within a single framework to improve spectral efficiency and resource utilization. Building upon these developments, recent research has further advanced toward integrated sensing, communication, and power transfer (ISCPT)~\cite{24-CommMag-ISCPT}, aiming to enable the joint utilization of spectrum and energy resources within a unified framework. While promising, ISCPT systems encounter significant challenges due to the strong coupling among sensing, communication, and power transfer functions. The resulting interference, resource competition, and design complexity make it challenging to achieve optimal performance across all domains.
% \IEEEPARstart{T}{he} evolution of wireless networks from the fifth-generation (5G) to beyond-5G (B5G) and the sixth-generation (6G) is driven by the growing demand for diverse and increasingly complex applications. While 5G primarily targets three verticals, namely enhanced mobile broadband (eMBB), massive machine-type communications (mMTC), and ultra-reliable low-latency communications (URLLC), ongoing research on 5G-Advanced (5G-A) emphasizes experience, expansion, extension, and excellence~\cite{23-COMST-6G, 25-ComStdMag-6gBridge}. This shift is reflected in emerging services such as virtual/augmented reality, sub-10~cm positioning, smart grid control, industrial automation, real-time financial transactions, non-terrestrial networks, drone connectivity, broadband access in underserved regions, and the gradual integration of artificial intelligence and machine learning (AI/ML) as technology enablers. Meeting the diverse requirements of these applications calls for wireless networks that can simultaneously deliver ultra-high data rates, accurate localization, reliable sensing, and efficient wireless power transfer~\cite{24-CommMag-ISCPT}. To address these challenges, intelligent metasurfaces~\cite{24-JPROC-RIS} have emerged as a promising solution, offering unprecedented flexibility to enhance communication, sensing, and energy harvesting within a unified framework.

Intelligent metasurfaces have emerged as a promising approach to overcome these challenges, offering flexible and programmable control over the wireless environment and enabling additional spatial degrees of freedom for joint optimization~\cite{24-JPROC-RIS}. Conventional reconfigurable intelligent surfaces (cRISs) have been extensively explored for enhancing communication and sensing through passive wavefront reflection~\cite{23-SPM-RIS_ISAC, 23-WC-RIS_ISAC, 25-NetwMag-ISAC_RIS}. Nevertheless, the inherent one-sided nature of cRIS imposes fundamental limitations on coverage and restricts the achievable beamforming performance. To mitigate these limitations, the simultaneously transmitting and reflecting RIS (STAR-RIS) has been recently proposed, enabling $360^{\circ}$ coverage and independent control of transmission and reflection~\cite{24-TVT-Secure_STAR_RIS_SWIPT}. These additional spatial degrees of freedom are particularly advantageous for ISCPT systems. Moreover, security is a key concern in ISCPT systems. Energy receivers (ERs), which are often geographically distributed and require only modest harvested power, may inadvertently or intentionally eavesdrop on confidential communications. Motivated by this challenge, this work proposes an integrated sensing, secure communication, and power transfer framework, hereafter referred to as ISSCPT, and develops a unified beamforming strategy to jointly enhance sensing performance, communication secrecy, and wireless energy transfer.

\subsection{Related Works}
Research on multi-functional wireless systems can be broadly categorized into SWIPT, ISAC, and ISCPT. In SWIPT systems, a single radio-frequency (RF) signal is used simultaneously for information and energy transfer~\cite{18-COMST-SWIPT}. Prior studies have addressed RF energy harvesting (EH) from both ambient and dedicated sources~\cite{18-COMST-SWIPT}, and examined RIS-assisted system design, including passive reflection optimization, channel estimation, and deployment strategies~\cite{22-Proc-RIS_SWIPT}. In particular, RIS-assisted multiple-input multiple-output (MIMO) SWIPT systems have been studied to maximize the weighted sum rate (WSR) at information receivers (IRs) while ensuring sufficient harvested power at ERs~\cite{20-JSAC-SWIPT-MIMO-IRS-Pan, 23-VTC-Kumar-IRS-SWIPT-MIMO, 20-JSAC-RIS_SWIPT_TPM}. Building on these studies, security concerns due to potential eavesdropping by ERs have been extensively investigated in both conventional RIS and STAR-RIS-aided SWIPT systems~\cite{14-TSP-SecureSWIPT_RuiZhang, 23-TWC-SEcure_RIS_SWIPT, 24-TVT-Secure_STAR_RIS_SWIPT, 25-SPL-IRS_Secure_SWIPT}.
Early works~\cite{14-TSP-SecureSWIPT_RuiZhang} laid the foundation by formulating secrecy rate maximization problems for IRs under harvested power constraints at ERs for a downlink multiple-input single-output (MISO) SWIPT system. Extending these studies, a RIS-assisted MISO downlink SWIPT system explored joint transmit beamforming and RIS design~\cite{23-TWC-SEcure_RIS_SWIPT}. Motivated by the $360^{\circ}$ coverage offered by STAR-RIS, subsequent works investigated STAR-RIS-aided SWIPT MISO downlink systems with exclusive information and EH zones, treating all ERs as potential eavesdroppers~\cite{24-TVT-Secure_STAR_RIS_SWIPT}. More recently, secrecy maximization in RIS-assisted underlay spectrum sharing SWIPT systems with multiple eavesdropping ERs has been addressed~\cite{25-SPL-IRS_Secure_SWIPT}. Collectively, these studies highlight the pivotal role of intelligent metasurfaces in enabling secure and efficient joint communication and energy transfer in SWIPT systems.

Meanwhile, the wireless communication community has experienced a paradigm shift toward treating sensing as a core service in next-generation wireless systems, driving growing research interest in ISAC. ISAC system designs are generally categorized as \emph{communication-centric}, \emph{sensing-centric}, or \emph{joint design}. Similar to the case of SWIPT, intelligent metasurfaces have been extensively explored to enhance ISAC performance~\cite{23-SPM-RIS_ISAC, 23-WC-RIS_ISAC, 25-NetwMag-ISAC_RIS, 25-OJCOMS-ISAC-RIS}. Key research areas include waveform design, resource allocation, interference management, channel modeling and estimation, synchronization, cross-layer optimization, and security. Among these, beamforming plays a critical role, as many challenges can be effectively addressed through beamforming optimization. While numerous beamforming strategies have been proposed for communication-centric and sensing-centric ISAC systems, joint beamforming designs remain relatively underexplored. In this direction,~\cite{23-SPL-Pan-JointDesign} proposed a joint active and passive beamforming approach to maximize radar signal-to-interference-plus-noise ratio (SINR) while minimizing communication multi-user interference (MUI). Similarly, a multi-objective optimization to maximize the communication sum rate and minimize the mean-squared error of the sensing beampattern mismatch in a stacked intelligent metasurface (SIM)-aided ISAC was presented in~\cite{25-TVT-ISAC_SIM_MultiObjective}.

Building upon these research directions, some recent studies have explored ISCPT systems. For instance,~\cite{24-CommMag-ISCPT, 25-ComMag-ISCPT} and~\cite{EA-JSAC-O_ISCPT} provided high-level overviews and experimental proofs-of-concept for ISCPT systems. In~\cite{24-JSAC-ISCPT}, the authors addressed the problem of optimal beamforming design to maximize sensing performance while guaranteeing communication and power transfer quality-of-service (QoS) in an ISCPT system with multiple IRs, multiple ERs, and a single target. The work in~\cite{24-TWC-Pareto_ISCPT} analyzed the fundamental performance limits of an ISCPT system consisting of a multi-antenna dual-function base station (DFBS), one multi-antenna IR, one multi-antenna ER, and one target, and characterized the Pareto-optimal trade-off among sensing accuracy, communication rate, and harvested energy. The optimal beamforming design to minimize the required transmit power while satisfying a Cram\'er-Rao bound (CRB) threshold for sensing, as well as achievable rate and energy harvesting constraints at multiple self-powered Internet-of-Things (IoT) devices was presented in~\cite{24-COMML-ISCPT_TPM}. For OFDM-based ISCPT systems,~\cite{25-TWC-ISCPT_OFDM} considered a beamforming design to minimize the average beampattern matching error for sensing, while meeting average communication rates at IRs and average harvested power at ERs. Similarly,~\cite{25-TCOM-ISCPT} investigated an ISCPT system with a hybrid analog-digital architecture, jointly optimizing beamforming and dynamic on-off control to minimize BS power consumption under communication rate, CRB, and energy harvesting constraints. 

At the same time, the potential of intelligent metasurfaces for enhancing ISCPT systems has been widely recognized and actively explored~\cite{24-WCL-WCL-ISCPT_RIS, 25-TWC-SecureISCPT_RIS, EA-IoTJ-ISCPT_IRS, EA-TWC-ISCPT_RIS, EA-TCCN-ISCPT-Active_STAR_RIS, EA-TCOM-ISCPT_aRIS}. For instance,~\cite{24-WCL-WCL-ISCPT_RIS} considered a cRIS-aided ISCPT system, optimizing the harvested energy at ERs while satisfying CRB and signal-to-noise ratio (SNR) constraints. An active-RIS-aided \emph{secure} ISCPT system was studied in~\cite{25-TWC-SecureISCPT_RIS} to maximize the total harvested power, subject to secrecy rate and integrated sidelobe level ratio (ISLR) requirements.
A cRIS-aided ISCPT system consisting of multiple single-antenna self-powered communication users and multiple targets was considered in~\cite{EA-IoTJ-ISCPT_IRS}, where the optimal beamforming design was proposed to maximize the minimum harvested energy at the communication users while satisfying the SINR constraint at the communication users and the  beampattern gain at the targets. The authors in~\cite{EA-TWC-ISCPT_RIS} investigated an RIS-assisted wireless-powered sensing and communication system to maximize the weighted sum of communication rate and the beampattern gain. Building on this, an active-RIS and STAR-RIS-aided ISCPT system was proposed in~\cite{EA-TCCN-ISCPT-Active_STAR_RIS} to maximize the sum rate of the IRs, while guaranteeing a predefined QoS for ERs and the targets, using minimum mean square error (MMSE) transformation, alternating optimization (AO), and successive convex approximation (SCA). 
Similarly, the authors in~\cite{EA-TCOM-ISCPT_aRIS} investigated an active-RIS-aided ISCPT system consisting of multiple targets and multiple self-powered IRs, and maximized the achievable sum rate while ensuring the sensing SNR requirements for the targets and the harvested-energy constraints at the IRs.
% Similarly, the authors in~\cite{EA-TCOM-ISCPT_aRIS} considered an active-RIS-aided ISCPT system, consisting of multiple targets and multiple self-powered IRs, and maximized the achievable sum rate, while satisfying sensing SNR for the targets and harvested energy at the IRs, using the quadratically-constrained quadratic program (QCQP), AO, majorization-minimization (MM), and SCA. 
Despite these advances, jointly optimizing beamforming for sensing, secure communication, and power transfer remains an open challenge, motivating the present study on ISSCPT.

\subsection{Contributions and Organization}
It is important to note that although the works presented in~\cite{24-WCL-WCL-ISCPT_RIS, 25-TWC-SecureISCPT_RIS, EA-IoTJ-ISCPT_IRS, EA-TWC-ISCPT_RIS, EA-TCCN-ISCPT-Active_STAR_RIS, EA-TCOM-ISCPT_aRIS} laid a solid foundation for beamforming design in metasurface-assisted ISCPT systems, the works in~\cite{24-WCL-WCL-ISCPT_RIS, 25-TWC-SecureISCPT_RIS, EA-IoTJ-ISCPT_IRS} followed a power-transfer-centric design,~\cite{EA-TWC-ISCPT_RIS} focused on a communication-and-sensing-centric design, and~\cite{EA-TCCN-ISCPT-Active_STAR_RIS, EA-TCOM-ISCPT_aRIS} were based on a communication-centric design; \emph{the beamforming strategy for a joint design} in a metasurface-aided ISSCPT system is still unexplored. A joint design strategy is of particular interest for practical deployment, as it enables simultaneous optimization of sensing, communication, security, and power transfer functionalities, whereas sensing-centric, communication-centric, or power-transfer-centric designs typically prioritize one function at the expense of the others. It is also noteworthy that the security of a metasurface-assisted ISCPT system was only investigated in~\cite{25-TWC-SecureISCPT_RIS}, where the targets were treated as potential eavesdroppers and the IRs and ERs were assumed to be co-located due to the consideration of self-powered IRs. 
However, when IRs and ERs are geographically separated and the ERs typically require only small amounts of power, even minor information leakage to the ERs can pose a significant security threat to the IRs’ messages. 
%Although prior studies have investigated metasurface-assisted ISAC and SWIPT systems, the secure beamforming design for metasurface-enabled ISCPT systems, where energy receivers may act as eavesdroppers, remains largely unexplored. 
In this work, we propose a novel STAR-RIS–assisted ISSCPT architecture and develop a unified beamforming optimization strategy that jointly enhances communication security, sensing performance, and energy transfer efficiency.
% However, to the best of our knowledge, none of the existing works have considered the problem of beamforming design in a \emph{metasurface-assisted} secure ISCPT system where ERs are potential eavesdroppers. To fill these research gaps, in this paper, we consider a STAR-RIS-aided secure ISCPT system with multiple IRs, multiple eavesdropping ERs, and multiple targets, and propose a joint design strategy for beamforming optimization. 

The main contributions of this paper are summarized as follows: 
\begin{itemize}
    \item We consider a STAR-RIS-enabled ISSCPT system with imperfect channel state information (CSI), consisting of a multi-antenna DFBS, a STAR-RIS, multiple single-antenna IRs, multiple single-antenna eavesdropping ERs, and multiple targets. To exploit a joint design strategy, we formulate the problem of minimizing the transmit power from the DFBS, while satisfying SINR constraints for the IRs, harvested power constraints for the ERs, information leakage constraints at the ERs, echo SINR constraints for target sensing, and power-splitting constraints for each of the STAR-RIS meta-atoms.
    
    \item The formulated optimization problem is challenging to solve due to the non-convexity of all the constraints and coupling between the design variables (i.e., transmit beamforming vectors, receive beamforming vectors, and STAR-RIS beamforming). To obtain a numerically efficient solution, we adopt an AO-based approach, where for a fixed transmit and STAR-RIS beamforming, we update the receive beamforming vectors in closed-form, and then for the fixed receive beamforming vectors, we simultaneously update the transmit and STAR-RIS beamforming vectors via SCA and second-order cone programming (SOCP). 
     
    \item  
    Simulation results demonstrate that the proposed STAR-RIS–aided scheme achieves a superior balance among communication security, sensing accuracy, and energy transfer efficiency, outperforming both cRIS-assisted and no-RIS systems. Furthermore, the results quantify the integration gains and performance trade-offs relative to STAR-RIS–aided ISAC and secure SWIPT baselines.
    % Extensive simulation results are presented to evaluate the performance of the proposed solution and to highlight the influence of various design parameters on the system performance. With the help of the presented results, we show the advantage of the STAR-RIS-aided design over its cRIS-assisted and no-RIS-aided counterparts, and also quantify the cost of integration over STAR-RIS-aided ISAC and STAR-RIS-aided secure SWIPT systems. 
\end{itemize}

The organization of this paper is as follows. The system model and problem formulation for the STAR-RIS-enabled ISSCPT system are introduced in Section~II. In Section~III, we provide a solution to the formulated optimization problem for minimizing transmit power. Simulation results are presented in Section~IV. Finally, Section~V  concludes the paper.  

\textit{Notation}: 
Uppercase boldface and lowercase boldface letters represent matrices and vectors, respectively. $ (\cdot)\trans $ and $ (\cdot)\herm $ represent (ordinary) transpose and Hermitian transpose, respectively. $ \| \cdot\|_2 $ and $ \| \cdot  \|_\mathsf{F}$ respectively denote the $ \ell_2 $-norm and Frobenius norm. $ \mathbf{I} $ represents the identity matrix, and $\boldsymbol{0}$ denotes an all-zero matrix. $ x \sim \mathcal{CN}(a, b) $ means that the scalar $x$ follows a complex Gaussian distribution with mean $a$ and variance $b$. $\mathbb E\{\cdot\}$ is the expectation operator. $ \diag (\mathbf{a})  $ denotes a diagonal matrix with the elements of vector  $\mathbf{a}$ on its main diagonal. For a complex-valued matrix $\mathbf X$, $\Re\{\mathbf X\}$ and $\Im\{\mathbf X\}$ represent the real and imaginary components of $\mathbf X$, respectively. $\mathcal{O}(\cdot)$ represents the Bachmann–Landau notation. $\mathbf X^{(\upsilon)}$ denotes the value of the optimization variable $\mathbf X$ in the $\upsilon$-th iteration. Finally, $ \mathbf{X} \in \mathbb{C}^{M\times N} $ indicates that $ \mathbf{X} $ is a complex-valued matrix with dimensions $ {M\times N} $.

\section{System Model and Problem Formulation} 

We consider the STAR-RIS-enabled ISSCPT system shown in~Fig.~\ref{Fig_system}, consisting of a DFBS (B), a STAR-RIS (S), $I$ IRs, $E$ ERs, and $L$ targets. Assume that B is equipped with $M_{\mathrm{T}}$ transmit antennas and $M_{\mathrm{R}}$ receive antennas, S is equipped with $M_{\mathrm S}$ meta-atoms, all the IRs and ERs are single-antenna nodes, while the targets are passive nodes. 
By $\mathbf{H}\in\mathbb{C}^{M_{\mathrm S}\times M_{\mathrm{T}}}$, $\mathbf{h}_{\mathrm{B},i}\in\mathbb{C}^{1\times M_{\mathrm{T}}}$, and $\mathbf{g}_{\mathrm{B},e}\in\mathbb{C}^{1\times M_{\mathrm{T}}}$,
we denote the wireless links from B to S, B to the $i$-th IR, and B to the $e$-th ER, respectively; here $i\in\mathcal{I}\triangleq\{1,2,\ldots,I\}$ and $e\in\mathcal{E}\triangleq\{1,2,\ldots,E\}$. 
Moreover, we use $\mathbf{h}_{\mathrm S,i}\in\mathbb{C}^{1\times M_{\mathrm S}}$, and $\mathbf{g}_{\mathrm S,e}\in\mathbb{C}^{1\times M_{\mathrm S}}$ to denote the wireless channels from S to the $i$-th IR, and S to the $e$-th ER, respectively. 
Similar to~\cite{24-WCL-WCL-ISCPT_RIS, EA-IoTJ-ISCPT_IRS, 24-TWC-RIS-BF, 24-TWC-SRIS-Act-BF}, we assume
that the links between S and targets do not exist due to blockage or large distances.
The echo channel for the BS  $\rightarrow$ the $l$-th target  $\rightarrow$ the BS is given by $\mathbf{V}_l \in\mathbb{C}^{M_{\mathrm{R}}\times M_{\mathrm{T}}}$ with $l\in\mathcal{L}\triangleq\{1,2,\ldots,L\}$.
% The transmit steering vector from B to $l$-th target is denoted by $\mathbf{v}_{\mathrm{T},l}\in\mathbb{C}^{1\times M_{\mathrm{T}}}$ with $l\in\mathcal{L}\triangleq\{1,2,\ldots,L\}$,
% and the receive echo steering vector from the $l$-th target to the
% receive antenna array at B is denoted by $\mathbf{v}_{\mathrm{R},l}\in\mathbb{C}^{M_{\mathrm{R}}\times1}$.
The residual self-interference link between the transmit and receive antenna arrays at B due to the full-duplex operation in monostatic sensing is denoted by $\mathbf{G}\in\mathbb{C}^{M_{\mathrm{R}}\times M_{\mathrm{T}}}$.
% Without loss of generality (WLOG), in this paper, we assume that the BS lies in the reflection region of the STAR-RIS{\color{blue}~\cite{25-STARS-MIMO}}. 
Without loss of generality (WLOG), in this paper, we assume that the BS is positioned on the front side of the STAR-RIS, corresponding to its reflection region~\cite{25-STARS-MIMO}.
For the system under consideration, B transmits a linear combination of information signals (intended for IRs), energy harvesting symbols (intended for ERs) and probing signals for sensing the targets, which is given by  
% the information receivers are interested in the communication message transmitted from B, while the energy receivers are interested in harvesting the wireless energy/power contained in the wireless transmission from B. The signal transmitted from B is given by 
\begin{align}
\mathbf{x}=\underbrace{\sum\limits_{i\in\mathcal{I}}\mathbf{f}_{\mathrm{I},i}w_{\mathrm{I},i}}_{\text{Communication signal}}+\underbrace{\sum\limits_{e\in\mathcal{E}}\mathbf{f}_{\mathrm{E},e}w_{\mathrm{E},e}}_{\text{Energy harvesting signal}}+\underbrace{\sum\limits_{l\in\mathcal{L}}\mathbf{f}_{\mathrm{T},l}w_{\mathrm{T},l}}_{\text{Probing signal}},\label{eq:tx-signal}
\end{align}
where $w_{\mathrm{I},i}$ is the information-bearing complex-valued constellation symbol intended for the $i$-th IR, $w_{\mathrm{E},e}$ is the energy harvesting signal for the $e$-th ER, and $w_{\mathrm{T},l}$ is the probing signal for the $l$-th target. 
The beamforming vector for the $i$-th IR, the $e$-th ER, and the $l$-th target are respectively denoted by $\mathbf{f}_{\mathrm{I},i}\in\mathbb{C}^{M_{\mathrm{T}}\times1}$, $\mathbf{f}_{\mathrm{E},e}\in\mathbb{C}^{M_{\mathrm{T}}\times1}$, and $\mathbf{f}_{\mathrm{T},l}\in\mathbb{C}^{M_{\mathrm{T}}\times1}$, respectively.

\begin{figure}[t] 
	\centering
	\includegraphics[width=0.9\columnwidth]{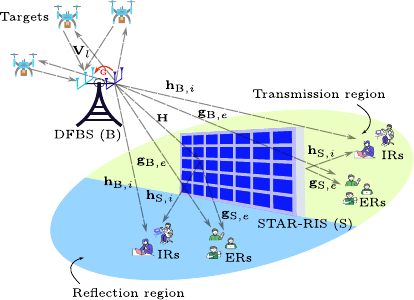}
	\caption{System model for the STAR-RIS-enabled ISSCPT.}
	\label{Fig_system}
\end{figure}

In this paper, we assume that B has imperfect CSI estimates about
$\mathbf{h}_{\mathrm{B},i}$, $\mathbf{h}_{\mathrm S,i}$, $\mathbf{g}_{\mathrm{B},e}$,
$\mathbf{g}_{\mathrm S,e}$, and $\mathbf{V}_{l}$,
respectively denoted by $\widehat{\mathbf{h}}_{\mathrm{B},i}$, $\widehat{\mathbf{h}}_{\mathrm S,i}$,
$\widehat{\mathbf{g}}_{\mathrm{B},e}$, $\widehat{\mathbf{g}}_{\mathrm S,e}$, and $\widehat{\mathbf{V}}_{l}$,
with the corresponding CSI errors denoted by $\bm{\delta}_{\mathrm{B},i}\sim\mathcal{CN}(\bm{0},\varsigma_{\mathrm{B},i}^{2}\mathbf{I})$,
$\bm{\delta}_{\mathrm S,i}\sim\mathcal{CN}(\bm{0},\varsigma_{\mathrm S,i}^{2}\mathbf{I})$,
$\bm{\delta}_{\mathrm{B},e}\sim\mathcal{CN}(\bm{0},\varsigma_{\mathrm{B},e}^{2}\mathbf{I})$,
$\bm{\delta}_{\mathrm S,e}\sim\mathcal{CN}(\bm{0},\varsigma_{\mathrm S,e}^{2}\mathbf{I})$,
and $\bm{\Delta}_{l}\sim\mathcal{CN}(\bm{0},\varsigma_{l}^{2}\mathbf{I})$.
WLOG, in the rest of this paper, we assume that $\varsigma_{\mathrm{B},i}^{2}=\varsigma_{\mathrm S,i}^{2}=\varsigma_{\mathrm{B},e}^{2}=\varsigma_{\mathrm S,e}^{2}=\varsigma_{l}^{2}=\varsigma^{2},\forall i\in\mathcal{I},e\in\mathcal{E},l\in\mathcal{L}$. 

\subsection{Communication Model at the IRs}
The received signal at the $i$-th IR is given by 
\begin{align}
y_{\mathrm{I},i}=\  & \big(\mathbf{h}_{\mathrm{B},i}+\mathbf{h}_{\mathrm S,i}\bm{\Theta}_{\mathrm{I},i}\mathbf{H}\big)\mathbf{x}+n_{\mathrm{I},i}\nonumber \\
% =\  & \big[\widehat{\mathbf{h}}_{\mathrm{B},i}+\bm{\delta}_{\mathrm{B},i}+\big(\widehat{\mathbf{h}}_{\mathrm S,i}+\bm{\delta}_{\mathrm S,i}\big)\bm{\Theta}_{\mathrm{I},i}\mathbf{H}\big]\mathbf{x}+n_{\mathrm{I},i}\nonumber \\
=\  & \big(\underbrace{\widehat{\mathbf{h}}_{\mathrm{B},i}+\widehat{\mathbf{h}}_{\mathrm S,i}\bm{\Theta}_{\mathrm{I},i}\mathbf{H}}_{\triangleq\widehat{\mathbf{z}}_{\mathrm{I},i}}\big)\mathbf{x}+\big(\underbrace{\bm{\delta}_{\mathrm{B},i}+\bm{\delta}_{\mathrm S,i}\bm{\Theta}_{\mathrm{I},i}\mathbf{H}}_{\triangleq\bm{\delta}_{\mathrm{I},i}}\big)\mathbf{x}+n_{\mathrm{I},i}\nonumber \\
=\  & \widehat{\mathbf{z}}_{\mathrm{I},i}\mathbf{x}+\bm{\delta}_{\mathrm{I},i}\mathbf{x}+n_{\mathrm{I},i},\label{eq:rx-signal_Ii}
\end{align}
where $n_{\mathrm{I},i}\sim\mathcal{CN}(0,\sigma_{\mathrm{I},i}^{2})$
is the additive white Gaussian noise (AWGN) at the $i$-th IR. Moreover, $\bm{\Theta}_{\mathrm{I},i}=\diag(\bm{\theta}_{\mathrm{I},i})$
with $\bm{\theta}_{\mathrm{I},i}$ being the STAR-RIS response
for the $i$-th IR. In this paper, we consider the power-splitting
protocol at the STAR-RIS, \emph{i.e.}, for an incident signal $x$, the response
of the $\varkappa$-th STAR-RIS meta-atom $\big(\varkappa\in\mathcal{M}_{\mathrm S}\triangleq\{1,2,\ldots,M_{\mathrm S}\}\big)$ in the reflection and transmission
regions are respectively given by $\theta_{\mathsf{R},\varkappa}$
and $\theta_{\mathsf{T},\varkappa}$, provided $\angle\theta_{\mathsf{R},\varkappa},\angle\theta_{\mathsf{T},\varkappa}\in[0,2\pi)$
and $\sum_{\mathsf{m}\in\{\mathsf{R},\mathsf{T}\}}|\theta_{\mathsf{m},\varkappa}|^{2}=1$.
Therefore, we can define $\bm{\theta}_{\mathrm{I},i}$ as 
\begin{align}
\bm{\theta}_{\mathrm{I},i}=\mathbb{A}_{\mathrm{I},i}\bm{\theta}_{\mathsf{R}}+(1-\mathbb{A}_{\mathrm{I},i})\bm{\theta}_{\mathsf{T}},\label{eq:theta_Ii_def}
\end{align}
where the indicator function $\mathbb{A}_{\mathrm{I},i}=1$ if the
$i$-th IR lies in the reflection region of the STAR-RIS, and $\mathbb{A}_{\mathrm{I},i}=0$
otherwise. Here $\bm{\theta}_{\mathsf{R}}=[\theta_{\mathsf{R},1},\ldots,\theta_{\mathsf{R},M_{\mathrm S}}]\trans\in\mathbb{C}^{M_{\mathrm S}\times1}$,
and $\bm{\theta}_{\mathsf{T}}=[\theta_{\mathsf{T},1},\ldots,\theta_{\mathsf{T},M_{\mathrm S}}]\trans\in\mathbb{C}^{M_{\mathrm S}\times1}$.
We denote $\mathbf{F}=[\mathbf{f}_{\mathrm{I},1},\ldots,\mathbf{f}_{\mathrm{I},I},\mathbf{f}_{\mathrm{E},1},\ldots,\mathbf{f}_{\mathrm{E},E},\mathbf{f}_{\mathrm{T},1},\ldots,\mathbf{f}_{\mathrm{T},L}]\in\mathbb{C}^{M_{\mathrm{T}}\times(I+E+L)}$
as the stack of all transmit beamforming vectors and $\bm{\theta}=\big[\bm{\theta}_{\mathsf{R}}\trans,\bm{\theta}_{\mathsf{T}}\trans\big]\trans\in\mathbb{C}^{2M_{\mathrm S}\times1}$
as the STAR-RIS beamforming vector. With a slight abuse of notations,
we redefine the stack of transmit beamforming vectors as $\mathbf{F}\triangleq\big[\mathbf{f}_{1},\ldots,\mathbf{f}_{I},\mathbf{f}_{I+1},\ldots\mathbf{f}_{I+E},\mathbf{f}_{I+E+1},\ldots\mathbf{f}_{I+E+L}\big]$.
We assume that the $i$-th IR is interested only in decoding $w_{\mathrm I,i}$, and hence, the SINR for decoding the intended signal $w_{\mathrm{I},i}$ at the
$i$-th IR can be given by
\begin{align}
&\gamma_{\mathrm{I},i}(\mathbf{F},\bm{\theta})=   \frac{\big|\widehat{\mathbf{z}}_{\mathrm{I},i}\mathbf{f}_{i}\big|^{2}}{\underbrace{\sum\limits_{\jmath\in\mathcal{Q}\setminus \{i\}}\big|\widehat{\mathbf{z}}_{\mathrm{I},i}\mathbf{f}_{\jmath}\big|^{2}}_{\text{MUI}}+\underbrace{\sum\limits_{q\in\mathcal{Q}}\mathbb{E}\big\{\big|\bm{\delta}_{\mathrm{I},i}\mathbf{f}_{q}\big|^{2}\big\}}_{\text{ICE}}+\sigma_{\mathrm{I},i}^{2}}\nonumber \\
&\!= \!\frac{\big|\widehat{\mathbf{z}}_{\mathrm{I},i}\mathbf{f}_{i}\big|^{2}}{\sum\limits_{\jmath\in\mathcal{Q} \setminus \{i\}} \! \big|\widehat{\mathbf{z}}_{\mathrm{I},i}\mathbf{f}_{\jmath}\big|^{2} \!+\!\varsigma^{2}\!\sum\limits_{q\in\mathcal{Q}}\!\Big(\big\Vert\mathbf{f}_{q}\big\Vert^{2}\!+\!\big\Vert\bm{\Theta}_{\mathrm{I},i}\mathbf{H}\mathbf{f}_{q}\big\Vert^{2}\Big)\!+\!\sigma_{\mathrm{I},i}^{2}},
% \\
% =\  & \big|\widehat{\mathbf{z}}_{\mathrm{I},i}\mathbf{f}_{i}\big|^{2}\Big[\sum\limits_{\jmath\in\mathcal{Q}/i}\big|\widehat{\mathbf{z}}_{\mathrm{I},i}\mathbf{f}_{\jmath}\big|^{2}\nonumber \\
%  & +\varsigma^{2}\sum\limits_{q\in\mathcal{Q}}\Big(\big\Vert\mathbf{f}_{q}\big\Vert^{2}+\big\Vert\bm{\Theta}_{\mathrm{I},i}\mathbf{H}\mathbf{f}_{q}\big\Vert^{2}\Big)+\sigma_{\mathrm{I},i}^{2}\Big]^{-1},
 \label{eq:gamma_Ii}
\end{align}
where $\mathcal{Q}\triangleq\{1,2,\ldots,Q\}$, $Q = I+E+L$ and ICE represents the interference due to CSI errors. 

\subsection{Power Harvesting Model at the ERs}
At the same time, the signal received at the $e$-th ER is given by
\begin{align}
&y_{\mathrm{E},e}=\big(\mathbf{g}_{\mathrm{B},e}+\mathbf{g}_{\mathrm S,e}\bm{\Theta}_{\mathrm{E},e}\mathbf{H}\big)\mathbf{x}+n_{\mathrm{E},e}\nonumber \\
% =\  & \big[\widehat{\mathbf{g}}_{\mathrm{B},e}+\bm{\delta}_{\mathrm{B},e}+\big(\widehat{\mathbf{g}}_{\mathrm S,e}+\bm{\delta}_{\mathrm S,e}\big)\bm{\Theta}_{\mathrm{E},e}\mathbf{H}\big]\mathbf{x}+n_{\mathrm{E},e}\nonumber \\
&= \big(\underbrace{\widehat{\mathbf{g}}_{\mathrm{B},e}+\widehat{\mathbf{g}}_{\mathrm S,e}\bm{\Theta}_{\mathrm{E},e}\mathbf{H}}_{\triangleq\widehat{\mathbf{z}}_{\mathrm{E},e}}\big)\mathbf{x}+\big(\underbrace{\bm{\delta}_{\mathrm{B},e}+\bm{\delta}_{\mathrm S,e}\bm{\Theta}_{\mathrm{E},e}\mathbf{H}}_{\triangleq\bm{\delta}_{\mathrm{E},e}}\big)\mathbf{x}+n_{\mathrm{E},e}\nonumber \\
&= \widehat{\mathbf{z}}_{\mathrm{E},e}\mathbf{x}+\bm{\delta}_{\mathrm{E},e}\mathbf{x}+n_{\mathrm{E},e},\label{eq:rx_signal_Ee}
\end{align}
where $n_{\mathrm{E},e}\sim\mathcal{CN}(0,\sigma_{\mathrm{E},e}^{2})$
is the AWGN at the $e$-th ER. Moreover, $\bm{\Theta}_{\mathrm{E},e}=\diag(\bm{\theta}_{\mathrm{E},e})$,
and 
\begin{align}
\bm{\theta}_{\mathrm{E},e}=\mathbb{A}_{\mathrm{E},e}\bm{\theta}_{\mathsf{R}}+(1-\mathbb{A}_{\mathrm{E},e})\bm{\theta}_{\mathsf{T}},\label{eq:theta_Ee_def}
\end{align}
where the indicator function $\mathbb{A}_{\mathrm{E},e}=1$ if the
$e$-th ER lies in the reflection region of the STAR-RIS, and $\mathbb{A}_{\mathrm{E},e}=0$
otherwise. Hence the total harvested power at the $e$-th ER is given
by
\begin{align}
 & P_{\mathrm{E},e}(\mathbf{F},\bm{\theta})=\eta_{\mathrm{E},e}\sum\limits_{q\in\mathcal{Q}}\Big(|\widehat{\mathbf{z}}_{\mathrm{E},e}\mathbf{f}_{q}|^{2}+\mathbb{E}\big\{|\bm{\delta}_{\mathrm{E},e}\mathbf{f}_{q}|^{2}\big\}\Big)\nonumber \\
&=\   \eta_{\mathrm{E},e}\sum\limits_{q\in\mathcal{Q}}\Big(|\widehat{\mathbf{z}}_{\mathrm{E},e}\mathbf{f}_{q}|^{2}+\varsigma^{2} \Big\{\big\Vert\mathbf{f}_{q}\big\Vert^{2}+\big\Vert\bm{\Theta}_{\mathrm{E},e}\mathbf{H}\mathbf{f}_{q}\big\Vert^{2}\Big\}\Big),\label{eq:harvested_power_Ee}
\end{align}
where $0<\eta_{\mathrm{E},e}<1$ is the power harvesting efficiency
of the $e$-th ER, and WLOG, it is assumed that the noise power at the ERs is negligibly small with respect to (w.r.t.) the power harvested at the ERs. 

\subsection{Information Leakage Model at the ERs}
In this paper, we assume that ERs are potential eavesdroppers, and each of the ERs attempts to decode the information symbols, $w_{\mathrm I, i}$. The SINR at the $e$-th ER to decode the signal intended for the $i$-th
IR, \emph{i.e.}, $w_{\mathrm{I},i}$ is given by 
\begin{align}
&\gamma_{\mathrm{E},e,i}(\mathbf{F},\bm{\theta})=  \frac{\big|\widehat{\mathbf{z}}_{\mathrm{E},e}\mathbf{f}_{i}\big|^{2}}{\underbrace{\sum\limits_{\jmath\in\mathcal{Q}\setminus \{i\}}|\widehat{\mathbf{z}}_{\mathrm{E},e}\mathbf{f}_{\jmath}|^{2}}_{\text{MUI}}+\underbrace{\sum\limits_{q\in\mathcal{Q}}\mathbb{E}\big\{\big|\bm{\delta}_{\mathrm{E},e}\mathbf{f}_{q}\big|^{2}\big\}}_{\text{ICE}}+\sigma_{\mathrm{E},e}^{2}}\nonumber \\
&\!=\!\frac{\big|\widehat{\mathbf{z}}_{\mathrm{E},e}\mathbf{f}_{i}\big|^{2}}{\sum\limits_{\jmath\in\mathcal{Q} \setminus \{i\}}\!\!\big|\widehat{\mathbf{z}}_{\mathrm{E},e}\mathbf{f}_{\jmath}\big|^{2}\!+\!\varsigma^{2}\!\!\sum\limits_{q\in\mathcal{Q}}\!\!\Big(\big\Vert\mathbf{f}_{q}\big\Vert^{2}\!+\!\big\Vert\bm{\Theta}_{\mathrm{E},e} \mathbf{H}\mathbf{f}_{q}\big\Vert^{2}\Big)\!+\!\sigma_{\mathrm{E},e}^{2}}.
% =\  & \big|\widehat{\mathbf{z}}_{\mathrm{E},e}\mathbf{f}_{i}\big|^{2}\Big[\sum\limits_{\jmath\in\mathcal{Q}/i}\big|\widehat{\mathbf{z}}_{\mathrm{E},e}\mathbf{f}_{\jmath}\big|^{2}\nonumber \\
%  & \!+\!\varsigma^{2}\!\sum_{q\in\mathcal{Q}}\!\Big(\big\Vert\mathbf{f}_{q}\big\Vert^{2}\!+\!\big\Vert\bm{\Theta}_{\mathrm{E},e}\mathbf{H}\mathbf{f}_{q}\big\Vert^{2}\Big)\!+\!\sigma_{\mathrm{E},e}^{2}\Big]^{-1}.
 \label{eq:gamma_Ee}
\end{align}

\subsection{Sensing Model at the DFBS}
We now model the sensing performance metric in terms of the received echo SINR at B. For this purpose, the received echo signal from the $l$-th target at B can be given by 
\begin{align}
\mathbf{y}_{\mathrm{B}}=\  & \sum\nolimits_{l\in\mathcal{L}}\alpha_{l}\mathbf{V}_{l}\mathbf{x}+\mathbf{G}\mathbf{x}+\mathbf{n}_{\mathrm{B}}\nonumber \\
=\  & \sum\nolimits_{l\in\mathcal{L}}\alpha_{l}\left(\widehat{\mathbf{V}}_{l} + \bm{\Delta}_{l} \right) \mathbf{x}+\mathbf{G}\mathbf{x}+\mathbf{n}_{\mathrm{B}},\label{eq:yB}
\end{align}
where $\alpha_{l}$ is the radar cross-section (RCS) of the $l$-th target with $\mathbb{E}\{|\alpha_{l}|^{2}\}=\bar{\alpha},\forall l\in\mathcal{L}$ and $\mathbf{n}_{\mathrm{B}}\sim\mathcal{CN}(\mathbf{0},\sigma_{\mathrm{B}}^{2}\mathbf{I})$ is the AWGN vector at the receive antenna array at B. 
% and we have neglected the term $\alpha_{l}\bm{\delta}_{\mathrm{R},l}\bm{\delta}_{\mathrm{T},l}\mathbf{x}$ since its value is much smaller than other terms.
Before processing the received signal, B applies a receive beamforming vector $\mathbf{c}_{l}\in\mathbb{C}^{M_{\mathrm{R}}\times1}$ to sense the $l$-th target, resulting in the following post-combining echo SINR for the $l$-th target:
\begin{align}
 & \gamma_{\mathrm{B},l}(\mathbf{F},\mathbf{c}_{l})=\bar{\alpha}\sum\limits_{q\in\mathcal{Q}}\big|\mathbf{c}_{l}^{\mathsf{H}}\widehat{\mathbf{V}}_{l}\mathbf{f}_{q}\big|^{2}\Big[\bar{\alpha}\sum\limits_{\jmath\in\mathcal{L} \setminus \{l\}}\sum\limits_{q\in\mathcal{Q}}\big|\mathbf{c}_{l}^{\mathsf{H}}\widehat{\mathbf{V}}_{\jmath}\mathbf{f}_{q}\big|^{2}\nonumber \\
&+\sum\limits_{q\in\mathcal{Q}}\Big(\bar{\alpha}\varsigma^{2}\big\Vert\mathbf{c}_{l}\big\Vert^{2}\big\Vert\mathbf{f}_{q}\big\Vert^{2}+\big|\mathbf{c}_{l}^{\mathsf{H}}\mathbf{G}\mathbf{f}_{q}\big|^{2}\Big)+\sigma_{\mathrm{B}}^{2}\big\Vert\mathbf{c}_{l}\big\Vert^{2}\Big]^{-1}.\label{eq:gamma_Bl}
\end{align}
We further define the stack of receive beamforming vectors as $\mathbf{C}\triangleq[\mathbf{c_{1}},\ldots,\mathbf{c}_{L}]\in\mathbb{C}^{M_{\mathrm{R}}\times L}$. 

\subsection{Problem Formulation}
With the given background, we now formulate the following optimization problem:\footnote{We assume that the targets are passive nodes without transceiver capability, and hence do not pose any security threat to the network. Likewise, the IRs are considered trusted users. These assumptions simplify the formulated problem, while the more general case with eavesdropping targets and untrusted users is left for future research.} 
\begin{subequations}
\label{eq:P1}
\begin{align}
\underset{\mathbf{F},\bm{\theta},\mathbf{C}}{\min}\  & \|\mathbf{F}\|_{\mathsf{F}}^{2} \label{eq:P1-objective}\\
\text{s.t.}\  & \gamma_{\mathrm{I},i}(\mathbf{F},\bm{\theta})\geq\Gamma_{\mathrm{I},i},\forall i\in\mathcal{I},\label{eq:P1-comm}\\
 & P_{\mathrm{E},e}(\mathbf{F},\bm{\theta})\geq\mathcal{P}_{e},\forall e\in\mathcal{E},\label{eq:P1-EH}\\
 & \gamma_{\mathrm{E},e,i}(\mathbf{F},\bm{\theta})\leq\Gamma_{\mathrm{E},e,i},\forall e\in\mathcal{E},i\in\mathcal{I},\label{eq:P1-leakage}\\
 & \gamma_{\mathrm{B},l}(\mathbf{F},\mathbf{c}_{l})\geq\Gamma_{\mathrm{B},l},\forall l\in\mathcal{L},\label{eq:P1-sensing}\\
 & \|\mathbf{c}_{l}\|^{2}=1,\forall l\in\mathcal{L},\label{eq:P1-unit-norm-beamforming}\\
 & \sum\nolimits_{\mathsf{m}\in\{\mathsf{R},\mathsf{T}\}}|\theta_{\mathsf{m},\varkappa}|^{2}=1,\forall\varkappa\in\mathcal{M}_{\mathrm S}.\label{eq:P1-UMC}
\end{align}
\end{subequations}
In~\eqref{eq:P1}, we aim to minimize the total transmit power from B, while maintaining a given communication QoS constraint for every IR, a power harvesting QoS constraint for every ER, an information leakage
QoS constraint at every ER corresponding to every information symbol, a sensing QoS constraint corresponding to every target, and the power-splitting protocol at the STAR-RIS. More specifically,~\eqref{eq:P1-objective} is the total transmit power from B, \eqref{eq:P1-comm} ensures that the SINR at the $i$-th IR is greater than or equal to the predefined threshold $\Gamma_{\mathrm{I},i}$, \eqref{eq:P1-EH} enforces that the total harvested power at the $e$-th ER is greater than or equal to $\mathcal{P}_{e}$, \eqref{eq:P1-leakage} ensures that the SINR of decoding $w_{\mathrm{I},i}$ (\emph{i.e.}, symbol intended for the $i$-th IR) at the $e$-th ER (\emph{i.e.}, the information leakage) is less than or equal to $\Gamma_{\mathrm{E},e,i}$, \eqref{eq:P1-sensing} ensures that the sensing SINR corresponding to the $l$-th target at B is at least $\Gamma_{\mathrm{B},l}$, \eqref{eq:P1-unit-norm-beamforming} enforces that echo signals are not amplified at the receive antenna array of B, and~\eqref{eq:P1-UMC} is due to the power-splitting protocol at the STAR-RIS.\footnote{Although a more practical coupled transmission–reflection phase-shift model for STAR-RIS was proposed in~\cite{22-JSTSP-STAR}, this work adopts the widely used decoupled phase-shift model to ensure analytical tractability.} 

In the following section, we propose an AO-based numerically efficient solution to~\eqref{eq:P1}, and present the corresponding convergence and complexity analyses of the proposed algorithm.
% \footnote{\textcolor{red}{Mention that coupled phase-shift model is not considered
% here.}} 

% Moreover, $\mathcal{M}_{\mathrm S}\triangleq\{1,2,\ldots,M_{\mathrm S}\}$. 

\section{AO-Based Proposed Solution}
% The problem in~\eqref{eq:P1} is non-convex due to the coupling between
% the optimization variables $\mathbf{F}$ and $\bm{\theta}$
% in~\eqref{eq:P1-objective}–\eqref{eq:P1-leakage}, that between
% $\mathbf{F}$ and $\mathbf{c}_{l}$ in~\eqref{eq:P1-sensing}, and
% the equality constraints in~\eqref{eq:P1-unit-norm-beamforming}
% and \eqref{eq:P1-UMC}. 
The optimization problem in~\eqref{eq:P1} is a nonconvex quadratically constrained quadratic program (QCQP). Its nonconvexity arises entirely from the constraints: the QoS constraints in~\eqref{eq:P1-comm}–\eqref{eq:P1-leakage} involve SINR and harvested-power expressions that are quadratic functions jointly w.r.t. the transmit beamforming matrix $\mathbf{F}$ and the STAR-RIS coefficients $\bm{\theta}$, while the sensing QoS constraints in~\eqref{eq:P1-sensing} contain bilinear couplings between $\mathbf{F}$ and the receive combiners $\mathbf{c}_{l}$. In addition, the unit-norm and power-splitting equalities in~\eqref{eq:P1-unit-norm-beamforming} and~\eqref{eq:P1-UMC} are also nonconvex quadratic constraints, so that the feasible set is nonconvex and standard convex optimization techniques cannot be applied directly.
To solve this challenging non-convex optimization
problem, we propose an AO-based iterative approach, where the optimal
receive beamforming vectors ($\mathbf{C}$) are obtained in closed-form
via eigenvalue decomposition, and then we simultaneously optimize
$\mathbf{F}$ and $\bm{\theta}$ using a sequence of convex
approximations.\footnote{It is worth emphasizing that the SCA framework developed in this work could, in principle, be extended to jointly update the entire variable tuple $(\mathbf{F},\bm{\theta},\mathbf{C})$ within a single convex surrogate problem at each iteration. However, we intentionally adopt a block coordinate / AO strategy, in which $(\mathbf{F},\bm{\theta})$ are updated via SCA whereas $\mathbf{C}$ is computed in closed-form, to exploit the problem structure, maintain manageable per-iteration computational complexity, and avoid solving an unnecessarily high-dimensional convex program.}
% we provide a detailed solution to the problem in~\eqref{eq:P1}, using an AO-based approach. $\left(\mathscr{P}_{1}\right)$ is decomposed into 

\subsection{Optimal Receive Beamforming for Fixed \texorpdfstring{$(\mathbf{F},\bm{\theta})$}{ (F, theta)}}

Observe that the receive beamforming vectors enter the formulation only through~\eqref{eq:P1-sensing} and~\eqref{eq:P1-unit-norm-beamforming}. Using~\eqref{eq:gamma_Bl}, the post-combining sensing SINR can be written as
\begin{align}
\gamma_{\mathrm{B},l}(\mathbf{c}_{l})=\frac{\mathbf{c}_{l}^{\mathsf{H}}\mathbf{M}_{1,l}\mathbf{c}_{l}}{\mathbf{c}_{l}^{\mathsf{H}}\mathbf{M}_{2,l}\mathbf{c}_{l}},\label{eq:gamma_Bl_transform}
\end{align}
where 
\begin{align}
\mathbf{M}_{1,l}\triangleq\bar{\alpha}\sum\nolimits_{q\in\mathcal{Q}}\widehat{\mathbf{V}}_{l}\mathbf{f}_{q}\mathbf{f}_{q}^{\mathsf{H}}\widehat{\mathbf{V}}_{l}^{\mathsf{H}},\label{eq:M1l}
\end{align}
and 
\begin{multline}
\mathbf{M}_{2,l}\triangleq  \bar{\alpha}\sum\nolimits_{\jmath\in\mathcal{L} \setminus \{l\}}\sum\nolimits_{q\in\mathcal{Q}}\widehat{\mathbf{V}}_{\jmath}\mathbf{f}_{q}\mathbf{f}_{q}^{\mathsf{H}}\widehat{\mathbf{V}}_{\jmath}^{\mathsf{H}} \\
 +\bar{\alpha}\varsigma^{2}\big\Vert\mathbf{F}\big\Vert_{\mathsf{F}}^2\mathbf{I} +\sum\nolimits_{q\in\mathcal{Q}}\mathbf{G}\mathbf{f}_{q}\mathbf{f}_{q}^{\mathsf{H}}\mathbf{G}^{\mathsf{H}}+\sigma_{\mathrm{B}}^{2}\mathbf{I}.\label{eq:M2l}
\end{multline}
In~\eqref{eq:gamma_Bl_transform}, we suppress the explicit dependence of the sensing SINR on $\mathbf{F}$ on the left-hand side (LHS), since the transmit beamforming vectors are treated as fixed parameters while optimizing over $\mathbf{c}_{l}$. For fixed $(\mathbf{F},\bm{\theta})$, maximizing $\gamma_{\mathrm{B},l}(\mathbf{c}_{l})$ subject to the unit-norm constraint in~\eqref{eq:P1-unit-norm-beamforming} constitutes a fractional quadratic program over the complex unit sphere. Since $\mathbf{M}_{1,l}$ and $\mathbf{M}_{2,l}$ are Hermitian and $\mathbf{M}_{2,l}\succ\mathbf{0}$ (due to the noise and interference terms), this subproblem can be equivalently expressed as a \emph{generalized Rayleigh quotient maximization} and admits a \emph{closed-form global solution} given by the principal generalized eigenvector of the matrix pair $(\mathbf{M}_{1,l},\mathbf{M}_{2,l})$. Equivalently, the optimal receive combiner at the $\upsilon$-th iteration can be given by
\begin{align}
\mathbf{c}_{l}^{(\upsilon)}=\bm{\lambda}_{\max}\big(\mathbf{M}_{2,l}^{-1}\mathbf{M}_{1,l}\big),\label{eq:cl_opt}
\end{align}
where $\bm{\lambda}_{\max}(\mathbf{X})$ denotes the generalized principal eigenvector of the matrix $\mathbf{X}$. By convention, the eigenvectors are taken to have unit Euclidean norm, so~\eqref{eq:cl_opt} satisfies the constraint in~\eqref{eq:P1-unit-norm-beamforming} by construction.

\subsection{Optimal \texorpdfstring{$(\mathbf{F},\bm{\theta})$}{ (F, theta)} for Fixed \texorpdfstring{$\mathbf{C}$}{ C}}

In this subsection, we use a sequence of convex approximation to simultaneously update $\mathbf{F}$ and $\bm{\theta}$, while $\mathbf{C}$ is fixed. Note that for a given $\mathbf{C}$,~\eqref{eq:P1} can be reformulated as follows: 
\begin{align}
\underset{\mathbf{F},\bm{\theta}}{\min}\ \big\{\|\mathbf{F}\|_{\mathsf{F}}^{2}\mid\eqref{eq:P1-comm}-\eqref{eq:P1-sensing},\eqref{eq:P1-UMC}\big\}.
\label{eq:P2}
\end{align}
% \eqref{eq:P1-comm},\eqref{eq:P1-EH},\eqref{eq:P1-leakage},\eqref{eq:P1-sensing}
It is easy to note that the objective in~\eqref{eq:P2} is already convex, and due to the minimization problem, no transformation is required. We thus shift our focus to convexify the constraints in~\eqref{eq:P1-comm}. 
\begin{theorem}
\label{thm:comm-Convex}For a fixed $i\in\mathcal{I}$, a convex equivalent
of~\eqref{eq:P1-comm} is given by 
\begin{subequations}
\label{eq:P1-comm-CONVEX}
\begin{align}
 & \frac{1}{\Gamma_{\mathrm{I},i}}\psi_{\mathrm{I},i}\big(\mathbf{f}_{i},\bm{\theta};\mathbf{f}_{i}^{(\upsilon)},\bm{\theta}^{(\upsilon)}\big)\geq\sum\nolimits_{\jmath\in\mathcal{Q} \setminus \{i\}}\big(\tau_{\mathrm{I},\jmath}^{2}+\bar{\tau}_{\mathrm{I},\jmath}^{2}\big) +\varsigma^{2}\nonumber \\
 & \times \sum_{q\in\mathcal{Q}}\Big[\big\Vert\mathbf{f}_{q}\big\Vert^{2}+\sum\limits_{\varkappa\in\mathcal{M}_{\mathrm S}}\big(\varphi_{\mathrm{I},i,q,\varkappa}^{2}+\bar{\varphi}_{\mathrm{I},i,q,\varkappa}^{2}\big)\Big]+\sigma_{\mathrm{I},i}^{2},\label{eq:P1-comm-A}\\
 & \tau_{\mathrm{I},i,\jmath}\geq\zeta_{\mathrm{I},i,\jmath}\big(\mathbf{f}_{\jmath},\bm{\theta};\mathbf{f}_{\jmath}^{(\upsilon)},\bm{\theta}^{(\upsilon)}\big),\forall\jmath\in\mathcal{Q}\setminus \{i\}, \label{eq:P1-comm-B}\\
 &\tau_{\mathrm{I},i,\jmath}\geq\zeta_{\mathrm{I},i,\jmath}\big(-\mathbf{f}_{\jmath},\bm{\theta};-\mathbf{f}_{\jmath}^{(\upsilon)},\bm{\theta}^{(\upsilon)}\big),\forall\jmath\in\mathcal{Q}\setminus \{i\}, \label{eq:P1-comm-C}\\
 & \bar{\tau}_{\mathrm{I},i,\jmath}\geq\bar{\zeta}_{\mathrm{I},i,\jmath}\big( j\mathbf{f}_{\jmath},\bm{\theta}; j\mathbf{f}_{\jmath}^{(\upsilon)},\bm{\theta}^{(\upsilon)}\big),\forall\jmath\in\mathcal{Q}\setminus \{i\},\label{eq:P1-comm-D}\\
 & \bar{\tau}_{\mathrm{I},i,\jmath}\geq\bar{\zeta}_{\mathrm{I},i,\jmath}\big(- j\mathbf{f}_{\jmath},\bm{\theta};- j\mathbf{f}_{\jmath}^{(\upsilon)},\bm{\theta}^{(\upsilon)}\big),\forall\jmath\in\mathcal{Q}\setminus \{i\},\label{eq:P1-comm-E}\\
 & \varphi_{\mathrm{I},i,q,\varkappa}\geq\mu_{\mathrm{I},i,q,\varkappa}\big(\mathbf{f}_{q},\bm{\theta};-\mathbf{f}_{q}^{(\upsilon)},\bm{\theta}^{(\upsilon)}\big),
 \notag \\ & \qquad \qquad \qquad \qquad \qquad \qquad \qquad
 \forall q\in\mathcal{Q},\varkappa\in\mathcal{M}_{\mathrm S},\label{eq:P1-comm-F}\\
 & \varphi_{\mathrm{I},i,q,\varkappa}\geq\mu_{\mathrm{I},i,q,\varkappa}\big(-\mathbf{f}_{q},\bm{\theta};-\mathbf{f}_{q}^{(\upsilon)},\bm{\theta}^{(\upsilon)}\big),
 \notag \\ & \qquad \qquad \qquad \qquad \qquad \qquad \qquad
 \forall q\in\mathcal{Q},\varkappa\in\mathcal{M}_{\mathrm S},\label{eq:P1-comm-G}\\
 & \bar{\varphi}_{\mathrm{I},i,q,\varkappa}\geq\bar{\mu}_{\mathrm{I},i,q,\varkappa}\big( j\mathbf{f}_{q},\bm{\theta}; j\mathbf{f}_{q}^{(\upsilon)},\bm{\theta}^{(\upsilon)}\big),
  \notag \\ & \qquad \qquad \qquad \qquad \qquad \qquad \qquad
 \forall q\in\mathcal{Q},\varkappa\in\mathcal{M}_{\mathrm S}, \label{eq:P1-comm-H} \\
 & \bar{\varphi}_{\mathrm{I},i,q,\varkappa}\geq\bar{\mu}_{\mathrm{I},i,q,\varkappa}\big(- j\mathbf{f}_{q},\bm{\theta};- j\mathbf{f}_{q}^{(\upsilon)},\bm{\theta}^{(\upsilon)}\big),
\notag \\ & \qquad \qquad \qquad \qquad \qquad \qquad \qquad
\forall q\in\mathcal{Q},\varkappa\in\mathcal{M}_{\mathrm S}.\label{eq:P1-comm-I}
\end{align}
\end{subequations}
\end{theorem}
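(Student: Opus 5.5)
The plan is to rewrite the SINR constraint~\eqref{eq:P1-comm} as ``signal power $\geq \Gamma_{\mathrm{I},i}\times$ (interference $+$ noise)''. I would then handle the only nonconvex ingredient, the bilinear coupling between $\bm{\theta}$ and $\mathbf{f}_q$ that enters through $\widehat{\mathbf{z}}_{\mathrm{I},i}=\widehat{\mathbf{h}}_{\mathrm{B},i}+\widehat{\mathbf{h}}_{\mathrm S,i}\bm{\Theta}_{\mathrm{I},i}\mathbf{H}$ and through $\bm{\Theta}_{\mathrm{I},i}\mathbf{H}\mathbf{f}_q$, using a difference-of-convex (DC) representation combined with first-order Taylor bounds. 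Throughout, ``convex equivalent'' is read in the usual SCA sense: an inner convex approximation that is tight at $(\mathbf{F}^{(\upsilon)},\bm{\theta}^{(\upsilon)})$.

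\textbf{Step 1: multiply out and use the bilinear identity.} Since $\bm{\Theta}_{\mathrm{I},i}=\diag(\bm{\theta}_{\mathrm{I},i})$, we can write
\[
\widehat{\mathbf{z}}_{\mathrm{I},i}\mathbf{f}=\widehat{\mathbf{h}}_{\mathrm{B},i}\mathbf{f}+\bm{\theta}_{\mathrm{I},i}\trans\diag(\widehat{\mathbf{h}}_{\mathrm S,i})\mathbf{H}\mathbf{f}.
\]
Here $\bm{\theta}_{\mathrm{I},i}$ is a linear selection of $\bm{\theta}$ via~\eqref{eq:theta_Ii_def}. Every term is therefore affine in $\mathbf{f}$ plus a bilinear form $\mathbf{a}(\bm{\theta})\trans\mathbf{B}\mathbf{f}$. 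For such a form I would use the polarization identity
\[
4\Re\{\mathbf{u}\herm\mathbf{v}\}=\|\mathbf{u}+\mathbf{v}\|^2-\|\mathbf{u}-\mathbf{v}\|^2 ,
\]
which expresses $\Re\{\widehat{\mathbf{z}}_{\mathrm{I},i}\mathbf{f}\}$ as a difference of two convex quadratics in $(\mathbf{f},\bm{\theta})$.

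\textbf{Step 2: convexify the signal term.} The desired signal power $|\widehat{\mathbf{z}}_{\mathrm{I},i}\mathbf{f}_i|^2$ appears on the ``$\geq$'' side. Because multiplying $\mathbf{f}_i$ by a common phase changes nothing, I can write it as $(\Re\{\widehat{\mathbf{z}}_{\mathrm{I},i}\mathbf{f}_i\})^2+(\Im\{\cdot\})^2$ and express it as a DC function. Replacing the subtracted convex part by its first-order Taylor expansion at $(\mathbf{f}_i^{(\upsilon)},\bm{\theta}^{(\upsilon)})$ gives a concave lower bound. This bound is $\psi_{\mathrm{I},i}$, and it is tight at the current iterate.

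\textbf{Step 3: bound the interference terms.} Each interference term is split as $|\widehat{\mathbf{z}}_{\mathrm{I},i}\mathbf{f}_\jmath|^2=(\Re)^2+(\Im)^2$. I introduce slack variables $\tau_{\mathrm{I},i,\jmath}\geq|\Re\{\widehat{\mathbf{z}}_{\mathrm{I},i}\mathbf{f}_\jmath\}|$ and $\bar\tau_{\mathrm{I},i,\jmath}\geq|\Im\{\cdot\}|$.
\begin{itemize}
\item The absolute value is equivalent to the two inequalities $\tau\geq\Re\{\widehat{\mathbf{z}}\mathbf{f}_\jmath\}$ and $\tau\geq\Re\{\widehat{\mathbf{z}}(-\mathbf{f}_\jmath)\}$. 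This explains the $\pm\mathbf{f}_\jmath$ pairs in~\eqref{eq:P1-comm-B}--\eqref{eq:P1-comm-C}.
\item Using $\Im\{x\}=\Re\{-jx\}$, the imaginary parts reduce to real parts evaluated at $\pm j\mathbf{f}_\jmath$, which gives~\eqref{eq:P1-comm-D}--\eqref{eq:P1-comm-E}.
\item In each inequality, the real part is a DC function. Linearizing its concave part (the negative norm) at the current point yields a convex upper bound $\zeta_{\mathrm{I},i,\jmath}$, respectively $\bar\zeta_{\mathrm{I},i,\jmath}$, which is tight at the iterate.
\end{itemize}

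\textbf{Step 4: bound the CSI-error terms.} The term $\|\bm{\Theta}_{\mathrm{I},i}\mathbf{H}\mathbf{f}_q\|^2$ expands as
\[
\|\bm{\Theta}_{\mathrm{I},i}\mathbf{H}\mathbf{f}_q\|^2=\sum_{\varkappa}\big|\theta_{\mathrm{I},i,\varkappa}[\mathbf{H}\mathbf{f}_q]_\varkappa\big|^2 ,
\]
which is again a sum of squared real and imaginary parts of scalar bilinear terms. I treat it exactly as in Step 3, with slacks $\varphi,\bar\varphi$ and DC upper bounds $\mu,\bar\mu$; this produces~\eqref{eq:P1-comm-F}--\eqref{eq:P1-comm-I}. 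The remaining terms $\varsigma^2\|\mathbf{f}_q\|^2$ and $\sigma_{\mathrm{I},i}^2$ are already convex.

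\textbf{Step 5: assemble and verify.} Substituting Steps 2--4 into the rearranged SINR constraint gives~\eqref{eq:P1-comm-A}. Its left-hand side is concave and its right-hand side is a convex sum of squares, so it is an SOC-representable convex constraint, and all auxiliary constraints are convex. I would then verify two properties:
\begin{itemize}
\item \emph{Feasibility (inner approximation).} Every point satisfying the system~\eqref{eq:P1-comm-CONVEX} satisfies~\eqref{eq:P1-comm}. This follows from the chain $\psi\leq|\cdot|^2$, $\tau^2\geq\Re^2$, and so on.
\item \emph{Tightness.} At $(\mathbf{F}^{(\upsilon)},\bm{\theta}^{(\upsilon)})$ all bounds hold with equality, and their gradients match those of the original functions. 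This gives equivalence in the SCA sense and the monotone-convergence property used later.
\end{itemize}

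The main obstacle is Step 2. The signal term must be lower-bounded jointly in $(\mathbf{f}_i,\bm{\theta})$, and the bound must be simultaneously concave and tight, with a consistent phase convention so that $|\cdot|^2$ can be handled via its real and imaginary parts. I would first try the polarization-plus-linearization route described above. If that becomes unwieldy, my fallback is the standard bound $|x|^2\geq2\Re\{x^{(\upsilon)*}x\}-|x^{(\upsilon)}|^2$. In that case the remaining bilinear term $\Re\{x^{(\upsilon)*}x\}$ is again linearized via polarization.
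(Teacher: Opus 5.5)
Your treatment of the interference and CSI-error terms is exactly what the paper does. That covers the slacks $\tau_{\mathrm{I},i,\jmath}\geq|\Re\{\widehat{\mathbf{z}}_{\mathrm{I},i}\mathbf{f}_{\jmath}\}|$ and $\bar{\tau}_{\mathrm{I},i,\jmath}\geq|\Im\{\widehat{\mathbf{z}}_{\mathrm{I},i}\mathbf{f}_{\jmath}\}|$, the slacks $\varphi,\bar{\varphi}$ for $\theta_{\mathrm{I},i,\varkappa}\mathbf{h}_{\varkappa}\mathbf{f}_{q}$, the $\pm\mathbf{f}$ and $\pm j\mathbf{f}$ pairs, and polarization with linearization of the negative norm to get the convex majorants $\zeta,\bar{\zeta},\mu,\bar{\mu}$. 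Your Step~5 check is also the right justification, and reading ``equivalent'' as an SCA-tight inner approximation is the only sense in which the statement holds.

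The step to correct is the route you list first in Step~2. Linearizing the \emph{subtracted} convex part $g_2$ of a DC function $g_1-g_2$ replaces $g_2$ by a minorant, so it produces the convex \emph{majorant} $g_1-\mathrm{lin}(g_2)$. That is what Step~3 needs, but the signal term needs the opposite. In addition, $(\Re\{\widehat{\mathbf{z}}_{\mathrm{I},i}\mathbf{f}_{i}\})^{2}+(\Im\{\widehat{\mathbf{z}}_{\mathrm{I},i}\mathbf{f}_{i}\})^{2}$ is quartic in $(\mathbf{f}_{i},\bm{\theta})$, so polarization does not directly give a quadratic DC split of it. The common-phase remark also does not let you take $\widehat{\mathbf{z}}_{\mathrm{I},i}\mathbf{f}_{i}$ real, because the required phase depends on $\bm{\theta}$; imposing it would be a bilinear equality.

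Your fallback is precisely the paper's construction, provided the correct part is linearized:
\begin{itemize}
\item First, $|\widehat{\mathbf{z}}_{\mathrm{I},i}\mathbf{f}_{i}|^{2}\geq 2\Re\{(a_{\mathrm{I},i}^{(\upsilon)})^{*}\widehat{\mathbf{z}}_{\mathrm{I},i}\mathbf{f}_{i}\}-|a_{\mathrm{I},i}^{(\upsilon)}|^{2}$ with $a_{\mathrm{I},i}^{(\upsilon)}=\widehat{\mathbf{z}}_{\mathrm{I},i}^{(\upsilon)}\mathbf{f}_{i}^{(\upsilon)}$.
\item Then, by polarization, $2\Re\{(a_{\mathrm{I},i}^{(\upsilon)})^{*}\widehat{\mathbf{z}}_{\mathrm{I},i}\mathbf{f}_{i}\}=\tfrac{1}{2}\big(\|a_{\mathrm{I},i}^{(\upsilon)}\widehat{\mathbf{z}}_{\mathrm{I},i}\herm+\mathbf{f}_{i}\|^{2}-\|a_{\mathrm{I},i}^{(\upsilon)}\widehat{\mathbf{z}}_{\mathrm{I},i}\herm-\mathbf{f}_{i}\|^{2}\big)$.
\item Finally, linearize the \emph{positive} term around $\mathbf{b}_{\mathrm{I},i}^{(\upsilon)}=a_{\mathrm{I},i}^{(\upsilon)}(\widehat{\mathbf{z}}_{\mathrm{I},i}^{(\upsilon)})\herm+\mathbf{f}_{i}^{(\upsilon)}$, and keep $-\tfrac{1}{2}\|a_{\mathrm{I},i}^{(\upsilon)}\widehat{\mathbf{z}}_{\mathrm{I},i}\herm-\mathbf{f}_{i}\|^{2}$ as it is.
\end{itemize}
This gives the concave minorant $\psi_{\mathrm{I},i}$, which is tight in value and gradient at the current iterate. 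With Step~2 done this way, your proof coincides with the paper's.
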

\begin{IEEEproof}
Please see Appendix~\ref{sec:proof-comm-Convex}.
\end{IEEEproof}
% Next, we note that the threshold on the harvested power at $e$-th ER is generally very small, and this can create numerical issues during the optimization process. To tackle this, we transform the constraints in~\eqref{eq:P1-EH} as follows:
Next, we observe that the harvested-power QoS thresholds at the $e$-th ER are typically several orders of magnitude smaller than the transmit power budget and interference terms. This disparity in magnitude can adversely affect the numerical conditioning of the resulting subproblems and lead to instability in off-the-shelf convex solvers. To improve numerical robustness while preserving equivalence, we reformulate the energy-harvesting constraints in~\eqref{eq:P1-EH} into a scaled, numerically better-conditioned form as follows:
\begin{align}
\frac{1}{\mathcal{P}_{e}}P_{\mathrm{E},e}(\mathbf{F},\bm{\theta})\geq1,\forall e\in\mathcal{E}.\label{eq:EH-normalized}
\end{align}
% Since the LHS of~\eqref{eq:EH-normalized} is non-concave, we obtain a corresponding concave lower bound using the following theorem. 
Since the LHS of~\eqref{eq:EH-normalized} is non-concave in the optimization variables $(\mathbf F, \bm \theta)$, we construct a locally tight concave minorant by invoking the following theorem, which yields a tractable inner approximation of the original constraint.
\begin{theorem}
\label{thm:EH-Convex}For a given $e\in\mathcal{E}$, a convex transformation of the constraints in~\eqref{eq:P1-EH} can be given by~\eqref{eq:P1-EH-CONVEX}, shown on the next page.  
%\begin{multline}
%\frac{\eta_{\mathrm{E},e}}{\mathcal{P}_{e}}\sum\limits_{q\in\mathcal{Q}}\Big[\psi_{\mathrm{E},e}\big(\mathbf{f}_{q},\bm{\theta};\mathbf{f}_{q}^{(\upsilon)},\bm{\theta}^{(\upsilon)}\big)+\varsigma^{2}\Big(2\Re\big\{\big(\mathbf{f}_{q}^{(\upsilon)}\big)^{\mathsf{H}}\mathbf{f}_{q}\big\}\\
%-\|\mathbf{f}_{q}^{(\upsilon)}\|^{2}+\sum\limits_{\varkappa\in\mathcal{M}_{\mathrm S}}\widetilde{\psi}_{\mathrm{E},e,\varkappa}\big(\mathbf{f}_{q},\bm{\theta};\mathbf{f}_{q}^{(\upsilon)},\bm{\theta}^{(\upsilon)}\big)\Big)\Big]\geq1.\label{eq:P1-EH-CONVEX}
%\end{multline}
\end{theorem}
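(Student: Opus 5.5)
The constraint in~\eqref{eq:EH-normalized} asks a sum of convex quadratic functions of $(\mathbf F,\bm\theta)$ to be at least one. A ``superlevel set of a convex function'' is nonconvex, so we would replace each convex term by its first-order Taylor expansion at the current iterate $(\mathbf F^{(\upsilon)},\bm\theta^{(\upsilon)})$. Each expansion is an affine global lower bound that is tight at the iterate. The resulting affine inequality is then a convex inner approximation of~\eqref{eq:P1-EH}, which is the structure~\eqref{eq:P1-EH-CONVEX} should display.

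\textbf{Step 1: the direct-link term $|\widehat{\mathbf z}_{\mathrm E,e}\mathbf f_q|^2$.} Write
\[
\widehat{\mathbf z}_{\mathrm E,e}\mathbf f_q=\widehat{\mathbf g}_{\mathrm B,e}\mathbf f_q+\bm\theta_{\mathrm E,e}\trans\diag(\widehat{\mathbf g}_{\mathrm S,e})\mathbf H\mathbf f_q .
\]
This is bilinear in $(\mathbf f_q,\bm\theta)$, so $|\cdot|^2$ is not jointly convex, and this is the main obstacle. We would handle it as in the proof of Theorem~\ref{thm:comm-Convex}, producing a jointly concave lower bound $\psi_{\mathrm E,e}(\mathbf f_q,\bm\theta;\mathbf f_q^{(\upsilon)},\bm\theta^{(\upsilon)})$ in three moves:
\begin{itemize}
\item Expand $|x|^2\geq 2\Re\{\bar x^{(\upsilon)}x\}-|x^{(\upsilon)}|^2$.
\item Split the bilinear product $\Re\{ab\}$ through the polarization identity $4ab=(a+b)^2-(a-b)^2$, applied to real and imaginary parts, so each term is a difference of convex squares.
\item Linearize the subtracted convex squares' counterparts (the added squares) at the iterate, keeping the negative squares as they are.
\end{itemize}
This yields a concave minorant that is tight at the iterate, with matching gradient.

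\textbf{Step 2: the CSI-error terms.} The terms $\varsigma^2\|\mathbf f_q\|^2$ are convex, so we lower-bound each by $2\Re\{(\mathbf f_q^{(\upsilon)})\herm\mathbf f_q\}-\|\mathbf f_q^{(\upsilon)}\|^2$. The terms $\varsigma^2\|\bm\Theta_{\mathrm E,e}\mathbf H\mathbf f_q\|^2$ decompose as $\sum_\varkappa|\theta_{\mathrm E,e,\varkappa}|^2|[\mathbf H\mathbf f_q]_\varkappa|^2$. Each product $|\theta_\varkappa|^2|\mathbf h_\varkappa\mathbf f_q|^2$, with $\mathbf h_\varkappa$ the $\varkappa$-th row of $\mathbf H$, equals $|\theta_\varkappa\mathbf h_\varkappa\mathbf f_q|^2$. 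This is again the squared modulus of a bilinear scalar, so the Step~1 device gives per-element minorants $\widetilde\psi_{\mathrm E,e,\varkappa}$.

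\textbf{Step 3: assembly.} Summing over $q$ and multiplying by $\eta_{\mathrm E,e}/\mathcal P_e$ gives a concave function that is $\geq1$. Every feasible point of this inequality is feasible for~\eqref{eq:P1-EH}, because each surrogate lies below its original term. Tightness at the iterate (equal value and gradient) guarantees that $(\mathbf F^{(\upsilon)},\bm\theta^{(\upsilon)})$ remains feasible, which is what the later convergence argument needs. Verifying that each surrogate is jointly concave and tight is the only delicate check. The rest is bookkeeping of the real and imaginary parts into SOC-representable form.
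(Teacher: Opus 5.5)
Your proposal is correct and follows essentially the same route as the paper's Appendix~\ref{sec:proof-EH-Convex}. Both use a first-order bound on $|x|^2$, the polarization identity to write the bilinear term as a difference of convex squares, and linearization of the positive square to obtain the concave minorant $\psi_{\mathrm{E},e}$. They then add the affine bound on $\|\mathbf{f}_q\|^2$ and apply the same device per meta-atom to $|\theta_{\mathrm{E},e,\varkappa}\mathbf{h}_\varkappa\mathbf{f}_q|^2$, which yields $\widetilde{\psi}_{\mathrm{E},e,\varkappa}$. The only slip is in your opening paragraph, which calls the summands convex and the resulting inequality affine; as your Step~1 itself notes, the bilinear terms are not jointly convex, so the final surrogate is concave rather than affine.
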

\begin{IEEEproof}
Please see Appendix~\ref{sec:proof-EH-Convex}. 
\end{IEEEproof}
\begin{figure*}
	\begin{equation}
		\frac{\eta_{\mathrm{E},e}}{\mathcal{P}_{e}}\sum\nolimits_{q\in\mathcal{Q}}\Big[\psi_{\mathrm{E},e}\big(\mathbf{f}_{q},\bm{\theta};\mathbf{f}_{q}^{(\upsilon)},\bm{\theta}^{(\upsilon)}\big)+\varsigma^{2}\Big(2\Re\big\{\big(\mathbf{f}_{q}^{(\upsilon)}\big)^{\mathsf{H}}\mathbf{f}_{q}\big\}
		-\|\mathbf{f}_{q}^{(\upsilon)}\|^{2}+\sum\nolimits_{\varkappa\in\mathcal{M}_{\mathrm S}}\widetilde{\psi}_{\mathrm{E},e,\varkappa}\big(\mathbf{f}_{q},\bm{\theta};\mathbf{f}_{q}^{(\upsilon)},\bm{\theta}^{(\upsilon)}\big)\Big)\Big]\geq1.\label{eq:P1-EH-CONVEX}
	\end{equation}
\hrulefill 
\end{figure*}
% We now turn our attention on the information leakage constraints in~\eqref{eq:P1-leakage}. Note that although the LHS of~\eqref{eq:P1-leakage} is similar to that of~\eqref{eq:P1-comm}, the convexification process is different from that provided in~\textbf{Theorem~\ref{thm:comm-Convex}}, due to the reversed inequality in~\eqref{eq:P1-leakage} compared to that in~\eqref{eq:P1-comm}. The following theorem presents a convex equivalent of the constraints in~\eqref{eq:P1-leakage}. 

We now turn our attention to the information leakage constraints in~\eqref{eq:P1-leakage}. Although the LHS of~\eqref{eq:P1-leakage} has the same functional form as that of~\eqref{eq:P1-comm}, the convexification strategy cannot mirror that in~\textbf{Theorem~\ref{thm:comm-Convex}} because the inequality direction in~\eqref{eq:P1-leakage} is reversed relative to~\eqref{eq:P1-comm}, and thus defines a different type of level set. Consequently, a distinct reformulation is required. The following theorem provides a convex representation that is equivalent to the original information leakage constraints in~\eqref{eq:P1-leakage}.
\begin{theorem}
\label{thm:leakage-Convex}For a given $e\in\mathcal{E}$ and $i\in\mathcal{I}$,
a convex reformulation of the constraints in~\eqref{eq:P1-leakage}
is given by 
\begin{subequations}
\label{eq:P1-leakage-CONVEX}
\begin{align}
 & \sigma_{\mathrm{E},e}^{2}+\sum \nolimits_{\jmath\in\mathcal{Q} \setminus \{i\}}\psi_{\mathrm{E},e}\big(\mathbf{f}_{\jmath},\bm{\theta};\mathbf{f}_{\jmath}^{(\upsilon)},\bm{\theta}^{(\upsilon)}\big)
 \nonumber  \\ 
 & +\varsigma^{2}\sum \limits_{q\in\mathcal{Q}}\Big[2\Re\big\{\big(\mathbf{f}_{q}^{(\upsilon)}\big)^{\mathsf{H}}\mathbf{f}_{q}\big\} + \sum\limits_{\varkappa\in\mathcal{M}_{\mathrm S}}\widetilde{\psi}_{\mathrm{E},e,\varkappa}\big(\mathbf{f}_{q},\bm{\theta};\mathbf{f}_{q}^{(\upsilon)},\bm{\theta}^{(\upsilon)}\big)
 \nonumber \\ & 
\qquad \qquad \qquad -\|\mathbf{f}_{q}^{(\upsilon)}\|^{2}\Big] \geq \frac{1}{\Gamma_{\mathrm{E},e,i}}\big(\tau_{\mathrm{E},e,i}^{2}+\bar{\tau}_{\mathrm{E},e,i}^{2}\big),\label{eq:P1-leakage-a}\\
 % & \qquad\geq\frac{1}{\Gamma_{\mathrm{E},e,i}}\big(\tau_{\mathrm{E},e,i}^{2}+\bar{\tau}_{\mathrm{E},e,i}^{2}\big),\label{eq:P1-leakage-a}\\
& \tau_{\mathrm{E},e,i}\geq\ \zeta_{\mathrm{E},e,i}\big(\mathbf{f}_{i},\bm{\theta};\mathbf{f}_{i}^{(\upsilon)},\bm{\theta}^{(\upsilon)}\big),\label{eq:P1-leakage-b}\\
& \tau_{\mathrm{E},e,i}\geq\ \zeta_{\mathrm{E},e,i}\big(-\mathbf{f}_{i},\bm{\theta};-\mathbf{f}_{i}^{(\upsilon)},\bm{\theta}^{(\upsilon)}\big),\label{eq:P1-leakage-c}\\
& \bar{\tau}_{\mathrm{E},e,i}\geq\ \bar{\zeta}_{\mathrm{E},e,i}\big( j\mathbf{f}_{i},\bm{\theta};- j\mathbf{f}_{i}^{(\upsilon)},\bm{\theta}^{(\upsilon)}\big), \label{eq:P1-leakage-d} \\
& \bar{\tau}_{\mathrm{E},e,i}\geq\ \bar{\zeta}_{\mathrm{E},e,i}\big(- j\mathbf{f}_{i},\bm{\theta};- j\mathbf{f}_{i}^{(\upsilon)},\bm{\theta}^{(\upsilon)}\big).\label{eq:P1-leakage-e}
\end{align}
\end{subequations}
\end{theorem}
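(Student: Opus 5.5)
The plan is to rewrite the leakage constraint~\eqref{eq:P1-leakage} in multiplied-out form and treat its two sides separately. After clearing the denominator, $\gamma_{\mathrm{E},e,i}\le\Gamma_{\mathrm{E},e,i}$ is equivalent to
\begin{align*}
\frac{1}{\Gamma_{\mathrm{E},e,i}}\big|\widehat{\mathbf{z}}_{\mathrm{E},e}\mathbf{f}_{i}\big|^{2}\le\sigma_{\mathrm{E},e}^{2}+\sum\nolimits_{\jmath\in\mathcal{Q}\setminus\{i\}}\big|\widehat{\mathbf{z}}_{\mathrm{E},e}\mathbf{f}_{\jmath}\big|^{2}+\varsigma^{2}\sum\nolimits_{q\in\mathcal{Q}}\Big(\|\mathbf{f}_{q}\|^{2}+\|\bm{\Theta}_{\mathrm{E},e}\mathbf{H}\mathbf{f}_{q}\|^{2}\Big).
\end{align*}
This inequality is reversed compared with~\eqref{eq:P1-comm}. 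Here the right-hand side, a sum of convex quadratic-type terms in $(\mathbf{F},\bm{\theta})$, must be \emph{lower}-bounded by a concave function. The left-hand side, a single squared modulus, must be \emph{upper}-bounded by a convex function.

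For the right-hand side we reuse the machinery of Theorem~\ref{thm:EH-Convex}. The terms $|\widehat{\mathbf{z}}_{\mathrm{E},e}\mathbf{f}_{\jmath}|^{2}$ and $\|\bm{\Theta}_{\mathrm{E},e}\mathbf{H}\mathbf{f}_{q}\|^{2}=\sum_{\varkappa}|\theta_{\mathrm{E},e,\varkappa}|^{2}|[\mathbf{H}\mathbf{f}_{q}]_{\varkappa}|^{2}$ are squared moduli of bilinear functions of $(\mathbf{f},\bm{\theta})$. Exactly as in the proof of Theorem~\ref{thm:EH-Convex}, they are replaced by the concave, locally tight minorants $\psi_{\mathrm{E},e}$ and $\widetilde{\psi}_{\mathrm{E},e,\varkappa}$ at $(\mathbf{f}^{(\upsilon)},\bm{\theta}^{(\upsilon)})$. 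These minorants come from the inequality $|x|^2\ge 2\Re\{\bar x^{(\upsilon)}x\}-|x^{(\upsilon)}|^2$ applied after writing each bilinear product $ab$ as $\tfrac14(|a+b|^2-|a-b|^2)$ and linearizing the convex part. The convex term $\|\mathbf{f}_q\|^2$ is replaced by its first-order Taylor bound $2\Re\{(\mathbf{f}_q^{(\upsilon)})^{\mathsf H}\mathbf{f}_q\}-\|\mathbf{f}_q^{(\upsilon)}\|^2$. This produces precisely the left-hand side of~\eqref{eq:P1-leakage-a}, and it is a concave function that is tight at the current point.

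For the left-hand side we split the modulus into real and imaginary parts: $|\widehat{\mathbf{z}}_{\mathrm{E},e}\mathbf{f}_{i}|^{2}=(\Re\{\cdot\})^2+(\Im\{\cdot\})^2$. We then introduce slack variables $\tau_{\mathrm{E},e,i}\ge|\Re\{\widehat{\mathbf{z}}_{\mathrm{E},e}\mathbf{f}_{i}\}|$ and $\bar\tau_{\mathrm{E},e,i}\ge|\Im\{\widehat{\mathbf{z}}_{\mathrm{E},e}\mathbf{f}_{i}\}|$. Each absolute value constraint is a pair of one-sided constraints, $\pm\Re\{\cdot\}\le\tau$ and $\pm\Im\{\cdot\}\le\bar\tau$. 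Since $\Im\{x\}=\Re\{-jx\}$, the imaginary-part pair can be written through $\pm j\mathbf{f}_i$, which accounts for the four constraints~\eqref{eq:P1-leakage-b}--\eqref{eq:P1-leakage-e}. Each $\Re\{\widehat{\mathbf{z}}_{\mathrm{E},e}\mathbf{f}_{i}\}$ is affine in $\mathbf{f}_i$ plus a bilinear term $\Re\{\widehat{\mathbf{g}}_{\mathrm S,e}\diag(\bm{\theta}_{\mathrm{E},e})\mathbf{H}\mathbf{f}_i\}$. We write that term as a difference of convex quadratics via $\Re\{a^{*}b\}=\tfrac14(|a+b|^2-|a-b|^2)$ and linearize the subtracted quadratic at the current point. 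This gives a convex majorant $\zeta_{\mathrm{E},e,i}$ (respectively $\bar\zeta_{\mathrm{E},e,i}$), so each slack constraint is an SOC-representable inner approximation. Finally, $\frac{1}{\Gamma}(\tau^2+\bar\tau^2)$ is convex, so~\eqref{eq:P1-leakage-a} reads concave $\ge$ convex, which is a convex set.

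It remains to show that the approximation is conservative and tight. Conservativeness follows from the chain original RHS $\ge$ concave minorant $\ge\frac1\Gamma(\tau^2+\bar\tau^2)\ge\frac1\Gamma|\widehat{\mathbf z}_{\mathrm{E},e}\mathbf f_i|^2$, so any feasible point of~\eqref{eq:P1-leakage-CONVEX} is feasible for~\eqref{eq:P1-leakage}. Tightness holds because every minorant and majorant coincides with the original function at $(\mathbf{F}^{(\upsilon)},\bm{\theta}^{(\upsilon)})$; hence the previous iterate remains feasible and the SCA sequence is monotone.

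The main obstacle is the sign bookkeeping in the $\zeta$ and $\bar\zeta$ majorants. One has to check that the linearization is applied to the \emph{subtracted} quadratic so that a valid convex upper bound results in all four sign cases $\pm\mathbf f_i,\pm j\mathbf f_i$. One must also confirm that the argument conventions in~\eqref{eq:P1-leakage-d} agree with the definitions used in Theorem~\ref{thm:comm-Convex}.
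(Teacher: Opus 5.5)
Your proposal is correct and follows the paper's proof in Appendix~\ref{sec:proof-leakage-Convex} essentially step for step. The interference-plus-noise side is minorized by $\psi_{\mathrm{E},e}$, $\widetilde{\psi}_{\mathrm{E},e,\varkappa}$ and the Taylor bound on $\|\mathbf{f}_q\|^2$; the leakage term is handled with slacks for $|\Re\{\cdot\}|$ and $|\Im\{\cdot\}|$ together with polarization-plus-linearization majorants. Two differences are worth noting: the paper polarizes the full product, $\Re\{\widehat{\mathbf{z}}_{\mathrm{E},e}\mathbf{f}_i\}=\tfrac14(\|\widehat{\mathbf{z}}_{\mathrm{E},e}^{\mathsf H}+\mathbf{f}_i\|^2-\|\widehat{\mathbf{z}}_{\mathrm{E},e}^{\mathsf H}-\mathbf{f}_i\|^2)$, rather than only its cascaded part, which is a harmless variation; and your explicit conservativeness/tightness chain is more precise than the paper's ``equivalent'' wording. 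Your caution about sign bookkeeping is also justified: for $\zeta_{\mathrm{E},e,i}(\pm\mathbf{f}_i,\cdot)$ to be a tight majorant, its constant term must be $\|(\widehat{\mathbf{z}}_{\mathrm{E},e}^{(\upsilon)})^{\mathsf H}\mp\mathbf{f}_i^{(\upsilon)}\|^2$ as in~\eqref{eq:A-5}, not as printed in Appendix~\ref{sec:proof-leakage-Convex}, and the expansion point in~\eqref{eq:P1-leakage-d} should be $j\mathbf{f}_i^{(\upsilon)}$, consistent with~\eqref{eq:C-4}.
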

\begin{IEEEproof}
Please see Appendix~\ref{sec:proof-leakage-Convex}. 
\end{IEEEproof}
% We now shift our focus to the sensing constraints in~\eqref{eq:P1-sensing}. These constraints are relatively easy to handle because $\mathbf{c}_{l}$ is fixed. Note that both the numerator and denominator in~\eqref{eq:gamma_Bl} are convex w.r.t. $\mathbf{F}$. Therefore, for the $l$-th target, a convex equivalent of~\eqref{eq:P1-sensing} can be given by linearizing the numerator and some simple algebraic manipulations, which can be expressed using~\eqref{eq:gamma_Bl-Convex}, shown on the next page,  where $\varpi_{lq}^{(\upsilon)}\triangleq\mathbf{c}_{l}^{\mathsf{H}}\widehat{\mathbf{V}}_{l}\mathbf{f}_{q}^{(\upsilon)}$.
% yielding~\eqref{eq:gamma_Bl-Convex}, shown on the next page,

We now shift our focus to the sensing constraints in~\eqref{eq:P1-sensing}. These constraints are comparatively tractable because the receive combiners $\mathbf{c}_{l}$ are treated as fixed parameters. In this case, both the numerator and the denominator of the sensing SINR in~\eqref{eq:gamma_Bl} are convex functions of the transmit beamformers $\mathbf{F}$. Hence, for each target $l$, we construct a convex inner approximation of~\eqref{eq:P1-sensing} by linearizing the numerator around the current iterate $\mathbf{F}^{(\upsilon)}$ and performing suitable algebraic rearrangements of the resulting fractional inequality. This leads to the convex constraint in~\eqref{eq:gamma_Bl-Convex}, shown on the next page, where $\varpi_{lq}^{(\upsilon)}\triangleq\mathbf{c}_{l}^{\mathsf{H}}\widehat{\mathbf{V}}_{l}\mathbf{f}_{q}^{(\upsilon)}$.
\begin{figure*}
	\begin{multline} \label{eq:gamma_Bl-Convex}
		\frac{\bar{\alpha}}{\Gamma_{\mathrm{B},l}}\sum\nolimits_{q\in\mathcal{Q}}\Big[2\Re\big\{\big(\varpi_{lq}^{(\upsilon)}\big)^{\mathsf{H}}\mathbf{c}_{l}^{\mathsf{H}}\widehat{\mathbf{V}}_{l}\mathbf{f}_{q}\big\}-\big|\varpi_{lq}^{(\upsilon)}\big|^{2}\Big] \\ 
		\geq\sum\nolimits_{q\in\mathcal{Q}}\Big(\bar{\alpha}\varsigma^{2}\big\Vert\mathbf{c}_{l}\big\Vert^{2}\big\Vert\mathbf{f}_{q}\big\Vert^{2}+\big|\mathbf{c}_{l}^{\mathsf{H}}\mathbf{G}\mathbf{f}_{q}\big|^{2}\Big) +\bar{\alpha}\sum\nolimits_{\jmath\in\mathcal{L} \setminus \{l\}}\sum\nolimits_{q\in\mathcal{Q}}\big|\mathbf{c}_{l}^{\mathsf{H}}\widehat{\mathbf{V}}_{\jmath}\mathbf{f}_{q}\big|^{2}+\sigma_{\mathrm{B}}^{2}\big|\big|\mathbf{c}_{l}^{\mathsf{H}}\big|\big|^{2}.
	\end{multline}
\hrulefill 
\end{figure*}
To prevent the increasing problem dimension induced by the introduction of multiple slack variables, we handle~\eqref{eq:P1-UMC} via a penalty-based convex relaxation. In particular, we first relax the equality constraints in~\eqref{eq:P1-UMC} to one-sided convex inequalities, thereby obtaining a convex outer approximation of the original feasible set, and then augment the objective in~\eqref{eq:P2} with an appropriately weighted exact-penalty term that measures the residual of these inequalities. For a sufficiently large penalty parameter, any optimal solution of the penalized problem satisfies the relaxed constraints with equality, thus recovering feasibility w.r.t.~\eqref{eq:P1-UMC}.
% To further simplify the optimization objective, we introduce a slack variable $t$ to replace the term $\|\mathbf{F}\|_{\mathsf{F}}^{2}$.
As a consequence,~\eqref{eq:P2}  can be rewritten as
\begin{subequations}
\label{eq:P3}
\begin{align}
\underset{\mathbf{F},\bm{\theta},\mathcal{S}}{\min}\  & \|\mathbf{F}\|_{\mathsf{F}}^{2}-\wp \big[ 2 \Re \big\{ \big(\bm{\theta}^{(\upsilon)}\big)^{\mathsf H} \bm{\theta}\big\} - \| \bm{\theta}^{(\upsilon)} \|^2 \big] \label{eq:P3-objective}\\
\text{s.t.}\  
 & \eqref{eq:P1-comm-CONVEX},\forall i\in\mathcal{I},\label{eq:P3-comm}\\
 & \eqref{eq:P1-EH-CONVEX},\forall e\in\mathcal{E},\label{eq:P3-EH}\\
 & \eqref{eq:P1-leakage-CONVEX},\forall e\in\mathcal{E},i\in\mathcal{I},\label{eq:P3-leakage}\\
 & \eqref{eq:gamma_Bl-Convex},\forall l\in\mathcal{L},\label{eq:P3-sensing}\\
 & \sum\nolimits_{\mathsf{m}\in\{\mathsf{R},\mathsf{T}\}}|\theta_{\mathsf{m},\varkappa}|^{2}\leq1,\forall\varkappa\in\mathcal{M}_{\mathrm S},\label{eq:P3-relaxed-UMC}
\end{align}
\end{subequations}
where $\wp\geq0$ is the regularization/penalty parameter and $\mathcal S$ is the set of auxiliary variables. 
%$\mathcal{S} \triangleq \{\bm{\tau}_{\mathrm{I}} , \bar{\bm{\tau}}_{\mathrm{I}}, \mathbf{\Phi}_{\mathrm{I}}, \bar{\mathbf{\Phi}}_{\mathrm{I}},\bm{\tau}_{\mathrm{E}}, \bar{\bm{\tau}}_{\mathrm{E}}\}$ is the set of the auxiliary variables.
%with $\bm{\tau}_{\mathrm{I}} = \{\tau_{\mathrm{I},i,\jmath} \  | \  i\in\mathcal{I}, \jmath\in\mathcal{Q} \setminus \{i\} \}$,
%$\bar{\bm{\tau}}_{\mathrm{I}} = \{\bar{\tau}_{\mathrm{I},i,\jmath} \  | \  i\in\mathcal{I}, \jmath\in\mathcal{Q}\setminus \{i\} \}$,
%$\mathbf{\Phi}_{\mathrm{I}}= \{\varphi_{\mathrm{I},i,q,\varkappa} \  | \  i\in\mathcal{I}, q\in\mathcal{Q}, \varkappa\in\mathcal{M}_{\mathrm S} \}$,
%$ \bar{\mathbf{\Phi}}_{\mathrm{I}} = \{\bar{\varphi}_{\mathrm{I},i,q,\varkappa} \  | \ i\in\mathcal{I}, q\in\mathcal{Q}, \varkappa\in\mathcal{M}_{\mathrm S} \}$,
%$\bm{\tau}_{\mathrm{E}} = \{\tau_{\mathrm{E},e,i} \  | \ e\in\mathcal{E}, i\in\mathcal{I} \}$,
%and
%$\bar{\bm{\tau}}_{\mathrm{E}} = \{\bar{\tau}_{\mathrm{E},e,i} \  | \ e\in\mathcal{E}, i\in\mathcal{I}\}$.
% Note that~\eqref{eq:P3} is an SOCP problem that can be efficiently solved using off-the-shelf convex solvers, \emph{e.g.}, CVX, CVXPY or MOSEK. 
% Finally, the proposed AO-based algorithm for optimizing $\mathbf{F}$, $\bm{\theta}$, and $\mathbf{C}$ is summarized in \textbf{Algorithm~\ref{algo}}.
Note that~\eqref{eq:P3} is an SOCP and can be efficiently handled by interior-point methods implemented in standard convex optimization packages, \emph{e.g.}, CVX or CVXPY, together with state-of-the-art conic solvers such as MOSEK. The resulting AO-based procedure for jointly optimizing $\mathbf{F}$, $\bm{\theta}$, and $\mathbf{C}$ is summarized in \textbf{Algorithm~\ref{algo}}.

\begin{algorithm}[t]
\caption{The Proposed AO-based Algorithm to Solve~\eqref{eq:P1}}

\label{algo}

\KwIn{$\mathbf{F}^{(0)}$, $\bm{\theta}^{(0)}$}

$\upsilon\leftarrow 0$\;

\Repeat{convergence }{

Given $\mathbf{F}^{(\upsilon)}$ and $\bm{\theta}^{(\upsilon)}$, obtain $\mathbf{C}^{(\upsilon+1)}$ via~\eqref{eq:cl_opt}\;
% $\mathbf{c}_{l}^{(\upsilon)}, \forall l\in\mathcal{L}$
Given $\mathbf{C}^{(\upsilon+1)}$, solve~\eqref{eq:P3} to obtain $\mathbf{F}^{(\upsilon+1)}$,
$\bm{\theta}^{(\upsilon+1)}$\;

%Update: $\mathbf{F}^{(\upsilon+1)}\leftarrow\mathbf{F}_{\mathrm{opt}}$,
%$\bm{\theta}^{(\upsilon+1)}\leftarrow\bm{\theta}_{\mathrm{opt}}$\;

$\upsilon\leftarrow\upsilon+1$\;

}

$\mathbf{F}_{\mathrm{opt}}\leftarrow\mathbf{F}^{(\upsilon)}$, $\bm{\theta}_{\mathrm{opt}}\leftarrow\bm{\theta}^{(\upsilon)}$,
$\mathbf{C}_{\mathrm{opt}}\leftarrow\mathbf{C}^{(\upsilon)}$\;

\KwOut{$\mathbf{F}_{\mathrm{opt}}$, $\bm{\theta}_{\mathrm{opt}}$,
$\mathbf{C}_{\mathrm{opt}}$}
\end{algorithm}

\subsection{Convergence Analysis}
Note that the optimal receive beamforming matrix $\mathbf{C}$ is obtained via the \emph{generalized Rayleigh quotient maximization} method, and therefore the convergence behavior of the proposed AO-based algorithm in \textbf{Algorithm~\ref{algo}} is essentially determined by the convergence of subproblem~\eqref{eq:P3}. To facilitate analysis, we define the objective function of~\eqref{eq:P3} prior to linearizing the regularization term as $f(\mathbf{F},\bm{\theta})\triangleq \|\mathbf{F}\|_{\mathsf{F}}^{2}-\wp\| \bm{\theta} \|^2$. At the $\upsilon$-th iteration, let $f(\mathbf{F}^{(\upsilon)},\bm{\theta}^{(\upsilon)})$ denotes the optimal solution of~\eqref{eq:P3}. 
Since this solution remains feasible for the subsequent iteration, the sequence of objective values satisfies $f(\mathbf{F}^{(\upsilon+1)},\bm{\theta}^{(\upsilon+1)})  \leq f(\mathbf{F}^{(\upsilon)},\bm{\theta}^{(\upsilon)})$, which implies that the sequence $\{f(\mathbf{F}^{(\upsilon)},\bm{\theta}^{(\upsilon)})\}$ is monotonically non-increasing.
In addition, the non-negativity of $\|\mathbf{F}\|_{\mathsf{F}}^{2}$ together with the boundedness of $\| \bm{\theta} \|^2$ enforced by the power cone in~\eqref{eq:P3-relaxed-UMC} ensures that $f(\mathbf{F},\bm{\theta}) \geq - \wp M_\mathrm S$. Therefore, $\{f(\mathbf{F}^{(\upsilon)},\bm{\theta}^{(\upsilon)})\}$ is bounded from below and convergent by the monotone convergence theorem.
As for the convergence of the iterates, the AO framework guarantees that the sequence admits limit points, and any such limit point satisfies the first-order stationarity condition. Therefore, \textbf{Algorithm~\ref{algo}} is guaranteed to produce a convergent sequence of objective values, and all its limit points are stationary solutions of the original problem.

% \vspace{-1.1em}
\subsection{Complexity Analysis}
We now analyze the computational complexity of \textbf{Algorithm~\ref{algo}}. 
The complexity of each iteration is primarily dominated by the operations in lines 3 and 4 of the algorithm.
Specifically, the calculation of the
optimal $\mathbf{C}$ involves matrix multiplication, matrix inversion, and eigenvalue decomposition presented in~\eqref{eq:gamma_Bl_transform}, with a complexity of $\mathcal{O}(L(M_{\mathrm{R}}^3 + M_{\mathrm{T}}^2 M_{\mathrm{R}} + M_{\mathrm{T}} M_{\mathrm{R}}^2))$.
Subsequently, for line 4, the total number of real-valued optimization variables in~\eqref{eq:P3} is $2[M_{\mathrm{T}}Q+2M_{\mathrm S}+2I(Q-1)+2IQM_{\mathrm S}+2EI] + 1$ and the total number of conic constraints of second order is $I+4I(Q-1)+4IQM_{\mathrm S}+E+5EI+L+M_{\mathrm S}+1$. 
% Then, the size of each conic constraint is $4M_{\mathrm{T}}^2Q^2 + I[2(Q+M_{\mathrm{T}}Q+QM_{\mathrm S}-1)+QM_{\mathrm S}+1]^2+4I(Q-1)(2M_{\mathrm{T}}+1)^2+4IQM_{\mathrm S}(2M_{\mathrm{T}}+1)^2+E(M_{\mathrm{T}}+1)^2+EI(M_{\mathrm{T}}+3)^2+4EI(2M_{\mathrm{T}}+1)^2+L(M_{\mathrm{T}}Q+1)^2+16M_{\mathrm S}$.
Therefore, following~\cite[Sec.~6.6.2]{11-BenTal-Convex-Opt}, the per-iteration complexity for~\eqref{eq:P3} in line 4 is given by $\mathcal{O}((IQM_{\mathrm S}+EI)^{0.5} (IQM_{\mathrm S}+EI+M_{\mathrm{T}}Q) (I^2Q^2M_{\mathrm S}^2 + IQM_{\mathrm S} M_{\mathrm{T}}^2 + E^2I^2 + M_{\mathrm{T}}^2Q^2I + EIM_{\mathrm{T}}^2 + LM_{\mathrm{T}}^2Q^2 ) )$.
Since, in a practical setting, $M_{\mathrm S} \gg \max\{M_{\mathrm{T}}, M_{\mathrm{R}}, I, E, L, Q\}$, the overall complexity of each iteration for \textbf{Algorithm~\ref{algo}} can be approximated by $\mathcal{O}(M_{\mathrm S}^{3.5})$.
% $\mathcal{O}(2(4IQM_{\mathrm S}+5EI)^{0.5} (2IQM_{\mathrm S}+2EI+M_{\mathrm{T}}Q) [16(I^2Q^2M_{\mathrm S}^2 + IQM_{\mathrm S} M_{\mathrm{T}}^2 + E^2I^2) + 4M_{\mathrm{T}}^2Q^2I + 17 EIM_{\mathrm{T}}^2 + LM_{\mathrm{T}}^2Q^2 ] )$ 

\begin{figure}[t] 
	\centering
	\includegraphics[width=0.85\columnwidth, height=0.7\columnwidth]
	{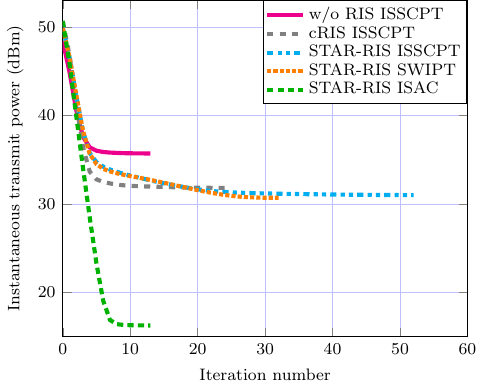}		
	\caption{Convergence behavior for the proposed AO-based algorithm.}
	\label{Fig_Convergence}
\end{figure}

\section{Simulation Results}
This section provides extensive numerical simulation results to evaluate the performance of the STAR-RIS-aided ISSCPT system.
Similar to~\cite{20-JSAC-SWIPT-MIMO-IRS-Pan}, we assume that B is located at the origin $(0,0)$~m, and S is deployed at $(5,2)$~m.
The IRs in the reflection and transmission regions are randomly distributed within a circle of radius 2~m centered at $(20,-1)$~m and $(20,5)$~m, respectively. Similarly, the ERs in both regions are randomly distributed within circles of radius 1 m centered at $(3,1)$~m and $(3,3)$~m, respectively. To ensure symmetry, the numbers of IRs and ERs are kept equal across the reflection and transmission regions.
The channel models $\mathbf{H}$, $\mathbf{h}_{\mathrm S,i},\forall i \in \mathcal{I}$, and $\mathbf{g}_{\mathrm S,e},\forall e \in \mathcal{E}$ follow a Rician distribution with a Rician factor of 3~dB and a path loss exponent of 2.2, while $\mathbf{h}_{\mathrm{B},i},\forall i \in \mathcal{I}$ and $\mathbf{g}_{\mathrm{B},e},\forall e \in \mathcal{E}$ are modeled as Rayleigh fading channels with a path loss exponent of 3.6.
Specifically, the large-scale path loss between two nodes with separation distance $d$~m is characterized by $(-30-10\beta \log_{10}(d/d_0))$~dB with $d_0 = 1$~m being the reference distance and $\beta$ denotes the path loss exponent.
The residual self-interference link 
$\mathbf{G}$ is modeled as a Rayleigh fading channel with variance equal to the noise variance.
In addition, two passive targets are considered at distances $15$~m and 
$18$~m from the BS, with azimuth angles of
$100^\circ$ and 
$150^\circ$, respectively.
Unless otherwise specified, we set 
$I = 2$, $E = 2$, $L = 2$, $M_{\mathrm{T}} = 10$, $M_{\mathrm{R}} = 4$, $M_{\mathrm S} = 64$, 
$\Gamma_{\mathrm{I},i} = \Gamma_{\mathrm{I}} = 10 \ \text{dB},\forall i\in\mathcal{I}$,
$\Gamma_{\mathrm{E},e,i} = \Gamma_{\mathrm{E}} = -10 \ \text{dB},\forall e\in\mathcal{E},i\in\mathcal{I}$,
$\Gamma_{\mathrm{B},l} = \Gamma_{\mathrm{B}} = 2 \ \text{dB},\forall l\in\mathcal{L}$,
$\mathcal{P}_{e} = \mathcal{P} = 10^{-4}~\mathrm{W},\forall e\in\mathcal{E}$,
$\bar{\alpha} = 0.5$,
$\eta_{\mathrm{E},e} = 0.8, \forall e\in\mathcal{E}$,
$\sigma_{\mathrm{I},i}^{2} =\sigma_{\mathrm{E},e}^{2} = \sigma_{\mathrm{B}}^{2} = \sigma^{2} = -90 \ \text{dBm},\forall i\in\mathcal{I},e\in\mathcal{E}$,
$\varsigma^{2} = 10 \sigma^{2}$,
and
$\wp = 0.01$.
Besides, the convergence tolerance is set to $10^{-3}$. 
The average transmit power in Figs.~\ref{Fig_mS}--\ref{Fig_sensingSINR} is obtained by averaging the instantaneous transmit power over 100 independent channel realizations.
For performance comparison of the proposed STAR-RIS-aided ISSCPT system, we consider the following benchmark systems: 

\begin{itemize}
	\item w/o RIS ISSCPT: This corresponds to a ISSCPT system without any metasurface. 
	\item cRIS ISSCPT: This corresponds to a cRIS-aided ISSCPT system. For a fair comparison with the STAR-RIS-aided counterpart, the cRIS is assumed to consist of $M_{\mathrm S}/2$ reflective and $M_{\mathrm S}/2$ refractive elements. Moreover, we consider perfect CSI for all the links in the case of cRIS ISSCPT. 
	\item STAR-RIS SWIPT: This corresponds to a STAR-RIS-aided secure SWIPT system, without any target. For such a system, the transmit signal from B is modeled as $\mathbf x = \sum_{i \in \mathcal I} \mathbf f_{\mathrm I, i} w_{\mathrm I, i} +  \sum_{e \in \mathcal E} \mathbf f_{\mathrm E, e} w_{\mathrm E, e}$. 
	\item STAR-RIS ISAC: This corresponds to a STAR-RIS-aided ISAC system, without any ER and information-leakage problem. In this case, the transmit signal from B is modeled as $\mathbf x = \sum_{i \in \mathcal I} \mathbf f_{\mathrm I, i} w_{\mathrm I, i} +  \sum_{l \in \mathcal L} \mathbf f_{\mathrm T, l} w_{\mathrm T, l}$. 
\end{itemize} 

Note that in Figs.~\ref{Fig_mR}–\ref{Fig_sensingSINR}, the average transmit power on the $y$-axis is abbreviated as “Avg. tx. power.” Moreover, in Figs.~\ref{Fig_mR}–\ref{Fig_commSINR} and Fig.~\ref{Fig_sensingSINR}, since the average transmit power required by the STAR-RIS-aided ISAC system is considerably smaller than that required by the other systems, a dual vertical-axis representation is used, where the performance of the STAR-RIS-aided ISAC system is indicated using the right-side vertical axis.

%For comparison, the performance of the proposed STAR-RIS-aided secure ISCPT system (STAR-RIS ISCPT) is benchmarked against the following schemes: a secure ISCPT system without RIS (w/o RIS ISCPT), a conventional RIS (cRIS)-assisted secure ISCPT system (cRIS ISCPT), a STAR-RIS-assisted secure SWIPT system (STAR-RIS SWIPT), and a STAR-RIS-assisted ISAC system (STAR-RIS ISAC).  
%In the case of the cRIS ISCPT scheme, we assume that the cRIS is partitioned into $M_{\mathrm S}/2$ reflective and $M_{\mathrm S}/2$ transmissive elements, yielding a total of $M_{\mathrm S}$. It is noteworthy that perfect CSI for all channels is assumed in this scheme.
%Unlike the preceding schemes, the STAR-RIS SWIPT scheme ignores the impact of targets and only imposes constraints on IRs and ERs.
%In contrast, the STAR-RIS ISAC scheme neglects ERs while focusing on the constraints associated with communication and sensing performance.

% \subsection{Algorithm Performance Evaluation}

\begin{figure}[t] 
	\centering
	\includegraphics[width=0.85\columnwidth, height=0.7\columnwidth]
	{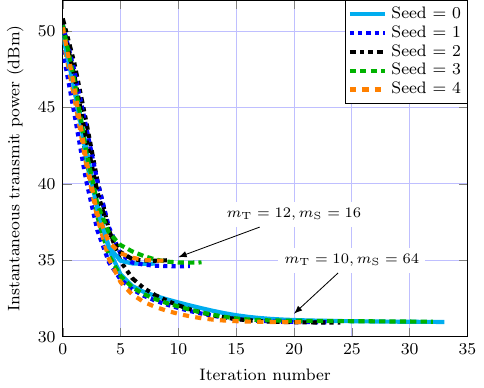}		
	\caption{Convergence behavior for different initial points (STAR-RIS ISSCPT).}
	\label{Fig_initialPoint}
\end{figure}
\emph{Convergence results:} In Fig.~\ref{Fig_Convergence}, we show the convergence behavior of the proposed AO-based algorithm for the STAR-RIS-aided ISSCPT system, compared with the baseline schemes. It can be noted that for the given set of channels, the proposed algorithm converges within a small number of iterations. For the ISSCPT system, the algorithm requires the most iterations to converge in the STAR-RIS-aided case, followed by the cRIS-aided case, while the no-RIS case converges with the fewest iterations. This is because the feasible constraint set is largest for STAR-RIS, smaller for cRIS, and smallest for the no-RIS system. A similar trend is observed for the number of required iterations in STAR-RIS-aided SWIPT and ISAC systems. It is also noteworthy that the proposed algorithm yields feasible points at each iteration, which allows for early termination if needed, for instance, under channels with very short coherence time.

% It can be clearly observed that among all ISCPT systems, the STAR-RIS scheme converges the slowest, requiring over 50 iterations. This is primarily because each element of the STAR-RIS necessitates simultaneous optimization of both transmission and reflection coefficients under the energy conservation constraint, which significantly increases the number of optimization variables and the degree of coupling. As a result, the step size adjustment becomes more sensitive, and more iterations are required to achieve stable convergence. Further comparison reveals that within the STAR-RIS schemes, the SWIPT system exhibits the second-largest number of iterations, only behind ISCPT. This is mainly attributed to the fact that the SWIPT system must guarantee sufficient energy harvesting for ERs while ensuring secure communication for IRs. Compared with the rate constraints of IRs, the constraints imposed on ERs are often more stringent and are highly coupled with the beamforming design, thereby substantially increasing the non-convexity and convergence difficulty of the problem. Therefore, the number of iterations for SWIPT is second only to ISCPT. In contrast, the ISAC system only needs to balance communication and sensing, with relatively simpler constraints and the lowest degree of coupling, thereby achieving the fastest convergence.

\emph{Sensitivity to initialization:} Since the proposed algorithm is based on the SCA-based framework, its performance can potentially be sensitive to the initial points. To examine this effect, Fig.~\ref{Fig_initialPoint} illustrates the impact of initial points for $\mathbf{F}$ and $\bm{\theta}$ on the convergence performance of the STAR-RIS ISSCPT system, where ``Seed'' denotes the random seed used to generate different $\big(\mathbf F^{(0)}, \bm{\theta}^{(0)}\big)$. As observed from the figure, the proposed algorithm achieves almost identical transmit power at convergence under different initialization, which demonstrates that the proposed algorithm is  insensitive to initialization. Furthermore, it can be noted that when $M_{\mathrm S} = 64$, the number of iterations required is significantly higher compared to the case with $M_{\mathrm S} = 16$. This is because an increase in the number of STAR-RIS elements introduces more optimization variables, thereby necessitating additional iterative steps for the convergence process.

% \subsection{Impact of System Parameters}

\begin{figure}[t] 
	\centering
	\includegraphics[width=0.85\columnwidth, height=0.7\columnwidth]
	{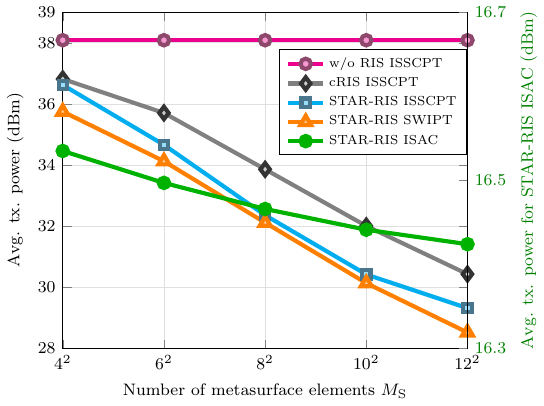}		
	\caption{Average transmit power for different values of $M_{\mathrm S}$.}
	\label{Fig_mS}
\end{figure}

\emph{Impact of the number of RIS elements $M_{\mathrm S}$:} In Fig.~\ref{Fig_mS}, we illustrate the impact of the number of RIS elements, $M_{\mathrm S}$, on the average transmit power requirement. For the no-RIS-aided ISSCPT system, the required transmit power remains constant due to the absence of a metasurface. In contrast, for all the other systems, the average transmit power decreases as the number of metasurface elements increases, highlighting the role of metasurfaces as enablers of green and secure integrated systems. This performance gain is attributed to the beamforming gain offered by the metasurface. Furthermore, as the number of metasurface elements grows, the performance gap between the no-RIS ISSCPT and STAR-RIS ISSCPT systems, as well as between the cRIS ISSCPT and STAR-RIS ISSCPT systems, becomes more pronounced, thereby establishing the clear advantage of employing STAR-RIS. For instance, when the number of metasurface elements increases from 64 to 100, the required transmit power for the STAR-RIS ISSCPT system is reduced by $\approx$15\% to $\approx$20\% compared to the no-RIS ISSCPT system, and by $\approx$4.4\% to $\approx$5\% compared to the cRIS ISSCPT system. However, integrating diverse services such as sensing, communication, and power transfer into a single system comes at the cost of additional transmit power. For instance, in the STAR-RIS-aided system with $M_{\mathrm S} = 64$, the ISSCPT scheme requires approximately 1\% more transmit power than the SWIPT system and about 85\% more transmit power than the ISAC system. These results clearly indicate that power transfer functionality is the dominant contributor to the total transmit power, and that metasurfaces are crucial to keep this requirement within practical limits in ISSCPT systems.

\emph{Impact of the number of transmit antennas $M_{\mathrm{T}}$:} Fig.~\ref{Fig_mT} depicts the impact of the number of transmit antennas at the DFBS, $M_{\mathrm{T}}$, on the average transmit power requirement. It is noticeable that as $M_{\mathrm{T}}$ increases, the required transmit power decreases. This reduction results from the higher spatial degrees of freedom and the active beamforming gain provided by more antennas, which enable more precise beamforming and improve transmission efficiency. It is also worth noting that increasing the number of transmit antennas provides greater benefits than increasing the number of metasurface elements, owing to the active beamforming gain of the former compared to the passive beamforming gain of the latter. For instance, from Figs.~\ref{Fig_mS} and~\ref{Fig_mT} for the STAR-RIS-aided ISSCPT system, increasing $M_{\mathrm S}$ from 16 to 36 (i.e., nearly doubling the number of metasurface elements), yields only about a $5.3\%$ reduction in the average transmit power requirement. In contrast, increasing $M_{\mathrm T}$ from 10 to 18 (i.e., less than doubling) achieves roughly a $7\%$ reduction. Nevertheless, deploying additional transmit antennas generally incurs a higher cost and hardware complexity compared to increasing the number of metasurface elements.
%However, a diminishing decrease is also observed as $M_{\mathrm{T}}$ increases further, since the beamforming gain gradually saturates, and each additional antenna contributes less to power savings.

\begin{figure}[t] 
	\centering
	\includegraphics[width=0.85\columnwidth, height=0.7\columnwidth]
	{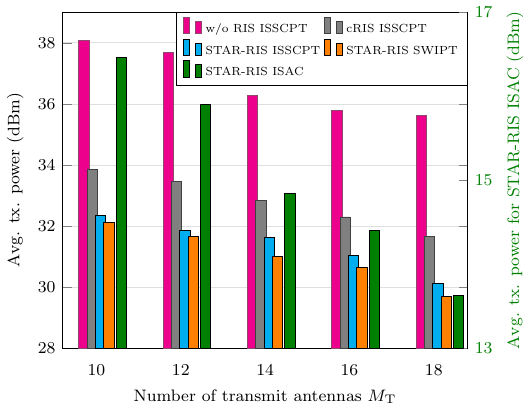}		
	\caption{Average transmit power for different values of $M_{\mathrm{T}}$.}
	\label{Fig_mT}
\end{figure}

\emph{Impact of the number of receive antennas $M_{\mathrm{R}}$:} In Fig.~\ref{Fig_mR}, we illustrate the impact of the number of receive antennas at the DFBS, $M_{\mathrm{R}}$, on the average transmit power. Since the variation in $M_{\mathrm{R}}$ only influences the sensing performance, and the SWIPT scheme does not support target sensing, its results are not included in this figure. It is worth noting that for ISSCPT systems, the gain from increasing $M_{\mathrm{R}}$ is relatively limited compared to that from increasing $M_{\mathrm{T}}$. This is because, under the considered system settings, the performance of ISSCPT systems is mainly constrained by the power harvesting requirements of the ERs, which are largely unaffected by the number of receive antennas. Consequently, the reduction in transmit power is less pronounced than with $M_{\mathrm{T}}$. In contrast, for the STAR-RIS-aided ISAC system, where the performance is dominated by sensing QoS constraints, the reduction in transmit power becomes much more significant. For example, when the number of receive antennas doubles from 4 to 8, the reduction in the required transmit power for the no-RIS ISSCPT, cRIS ISSCPT, STAR-RIS ISSCPT, and STAR-RIS ISAC systems is approximately $0.64\%$, $2.2\%$, $1.5\%$, and $21\%$, respectively.
%Similar treatment is applied in the subsequent Figs.~\ref{Fig_EH} and~\ref{Fig_sensingSINR}, where a certain scheme is not considered. The trend in Fig.~\ref{Fig_mR} is generally consistent with that in Fig.~\ref{Fig_mT} and will not be reiterated here. 
%This is because an increase in $M_{\mathrm{T}}$ simultaneously enhances communication, sensing, and power transfer performance, thereby significantly reducing the required transmit power. In contrast, increasing $M_{\mathrm{R}}$ only benefits sensing performance and does not directly enhance communication or power transfer. 

\begin{figure}[t] 
	\centering
	\includegraphics[width=0.85\columnwidth, height=0.7\columnwidth]	{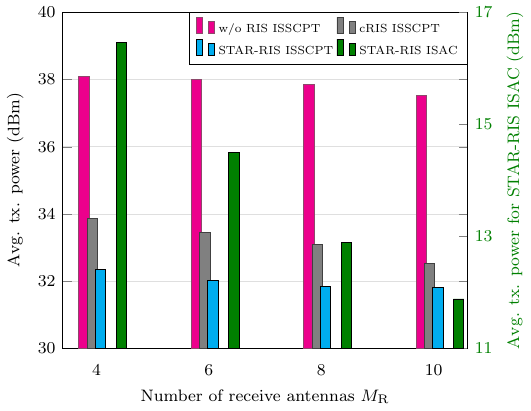}		
	\caption{Average transmit power for different values of $M_{\mathrm{R}}$.}
	\label{Fig_mR}
\end{figure}
\begin{figure}[t] 
	\centering
	\includegraphics[width=0.85\columnwidth, height=0.7\columnwidth]
	{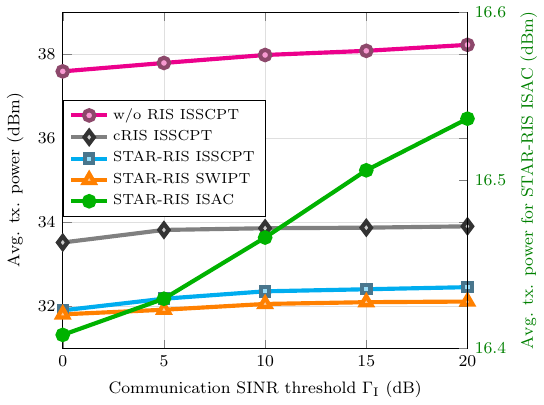}		
	\caption{Average transmit power gain for different values of $\Gamma_{\mathrm{I}}$.}
	\label{Fig_commSINR}
\end{figure}

\begin{figure}[t] 
	\centering
	\includegraphics[width=0.85\columnwidth, height=0.7\columnwidth]
	{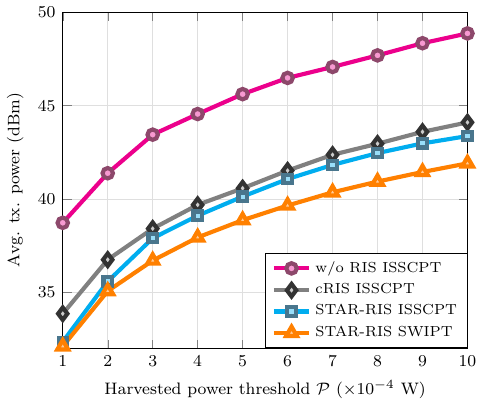}		
	\caption{Average transmit power for different values of $\mathcal{P}$.}
	\label{Fig_EH}
\end{figure}
\begin{figure}[t] 
	\centering
	\includegraphics[width=0.85\columnwidth, height=0.7\columnwidth]
	{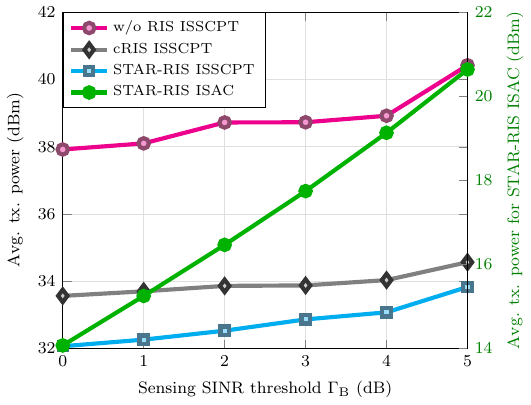}		
	\caption{Average transmit power for different values of $\Gamma_{\mathrm{B}}$.}
	\label{Fig_sensingSINR}
\end{figure}

\emph{Impact of the communication SINR threshold $\Gamma_{\mathrm{I}}$:} Fig.~\ref{Fig_commSINR} shows the impact of the communication SINR threshold for each IR, $\Gamma_{\mathrm{I}}$, on the average transmit power requirement. As $\Gamma_{\mathrm{I}}$ increases, the required transmit power of the system initially rises to satisfy the more stringent communication constraints. However, for ISSCPT systems, beyond a certain value of $\Gamma_{\mathrm{I}}$, the rate of increase in transmit power gradually decreases. This behavior is due to the joint effect of communication, energy harvesting, and sensing requirements: the beamforming design is already optimized to meet the sensing and power harvesting constraints while efficiently directing power toward the IRs, allowing further increases in $\Gamma_{\mathrm{I}}$ to be accommodated without a significant rise in total transmit power. Nevertheless, if $\Gamma_{\mathrm{I}}$ is increased well beyond this point, additional power would eventually be required, as the system reaches the limits of its beamforming and spatial degrees of freedom. This saturation effect highlights the efficiency of the STAR-RIS-assisted joint beamforming design and its ability to effectively reuse available power across multiple objectives. At the same time, it can also be observed that the STAR-RIS SWIPT system requires slightly lower transmit power compared to the STAR-RIS ISSCPT system due to the absence of sensing requirements, while the STAR-RIS ISAC system demands significantly lower transmit power. This highlights the dominance of the power harvesting constraint in determining the overall transmit power requirement of the ISSCPT and SWIPT systems under the considered settings.

%Fig.~\ref{Fig_commSINR} shows the impact of the communication SINR threshold for each IR, $\Gamma_{\mathrm{I}}$, on the average transmit power requirement. It is evident that as $\Gamma_{\mathrm{I}}$ increases, the required transmit power of the system also increases to meet the more stringent communication constraints. However, for the case of secure ISCPT systems, it is noteworthy that beyond a certain value of $\Gamma_{\mathrm{I}}$, the rate of increase in the transmit power requirement decreases. This is mainly because 
%satisfying the communication constraints in this system is relatively straightforward, so even as $\Gamma_{\mathrm{I}}$ continues to increase, the additional power required is relatively limited, resulting in a gradually flattening power growth curve.
%Moreover, it can be observed that the gap between the ISCPT and SWIPT schemes widens as $\Gamma_{\mathrm{I}}$ increases. This is due to the fact that in the ISCPT system, communication generates interference to sensing. As $\Gamma_{\mathrm{I}}$ rises, this interference becomes more pronounced, requiring extra transmit power to satisfy the sensing performance constraint. Consequently, compared with the SWIPT system, which only needs to satisfy communication and power transfer requirements, the ISCPT system exhibits significantly higher power demand at high thresholds.

\emph{Impact of the harvested power threshold $\mathcal{P}$:} In Fig.~\ref{Fig_EH}, we illustrate the impact of the harvested power threshold for each ER, $\mathcal{P}$, on the average transmit power. As expected, when $\mathcal{P}$ increases, the transmit power required by the system rises significantly. This observation validates the earlier conclusion that the power harvesting constraints of the ERs dominate the system performance, since additional transmit power must be consumed to satisfy the energy demands of all ERs as $\mathcal{P}$ increases.  Furthermore, this figure exhibits a trend similar to Fig.~\ref{Fig_commSINR}: as $\mathcal{P}$ increases, the power gap between the ISSCPT and SWIPT schemes gradually widens, and this effect is even more pronounced here. The reason is that, in the ISSCPT system, transmit power must be allocated to simultaneously satisfy both power harvesting and sensing requirements. Hence, under stringent power harvesting constraints, the ISSCPT scheme requires more transmit power than the SWIPT scheme, which only needs to satisfy the power harvesting requirements. Nevertheless, STAR-RIS consistently achieves lower transmit power than cRIS, even when cRIS is assumed to have perfect CSI. Moreover, a notably lower transmit power is required in the metasurface-assisted ISSCPT system compared to its no-RIS counterpart, further emphasizing the significance of incorporating metasurfaces into integrated system design.

\emph{Impact of the sensing SINR threshold $\Gamma_{\mathrm{B}}$:} Fig.~\ref{Fig_sensingSINR} displays the impact of the sensing SINR threshold, $\Gamma_{\mathrm{B}}$, on the average transmit power. As expected, an increase in $\Gamma_{\mathrm{B}}$ leads to a higher transmit power requirement for both the ISSCPT and ISAC systems. Interestingly, in the case of the ISSCPT system, the rate of increase in transmit power is relatively small for low-to-moderate values of $\Gamma_{\mathrm{B}}$, but grows significantly for larger values of $\Gamma_{\mathrm{B}}$. This occurs because, at low $\Gamma_{\mathrm{B}}$, the ISSCPT system operates in a power-harvesting-limited region, whereas for high $\Gamma_{\mathrm{B}}$ it transitions into a sensing-limited region. In contrast, the impact of increasing $\Gamma_{\mathrm{B}}$ is more pronounced in the STAR-RIS ISAC system over the range, reflecting the dominance of sensing constraints in this case.

%Similarly, as $\Gamma_{\mathrm{B}}$ increases, the required transmit power also rises. However, unlike the trends observed in Figs.~\ref{Fig_commSINR} and~\ref{Fig_EH}, the power growth here exhibits a steeper slope. This indicates that, since sensing performance is more sensitive to the quality of the received signal, stricter sensing constraints require greater power investment to effectively suppress interference and noise, thereby ensuring reliable target detection. Furthermore, similar to the phenomenon observed in Fig.~\ref{Fig_EH}, in the ISCPT system, increasing $\Gamma_{\mathrm{B}}$ gradually reduces the power gap between the STAR-RIS and cRIS schemes. The underlying reason is the same as in Fig.~\ref{Fig_EH} and is not repeated here. Nevertheless, the STAR-RIS scheme consistently maintains the optimal performance.

\section{Conclusion}
In this work, we studied the problem of transmit power minimization in a STAR-RIS-aided ISSCPT system under imperfect CSI, designing transmit, receive, and STAR-RIS beamforming jointly using AO, SCA, and SOCP. The results showed that ISSCPT generally requires more transmit power than ISAC and secure SWIPT, while STAR-RIS consistently outperforms both cRIS and no-RIS designs. We observed that transmit power grows with the communication SINR threshold but tends to saturate at high values, energy harvesting constraints dominate at low-to-moderate thresholds, and sensing SINR drives the steepest power increase at high thresholds. 
Overall, these findings highlight the benefits of STAR-RIS in reducing power consumption and provide useful insights for designing ISSCPT systems.
% Overall, these findings highlight the benefits of STAR-RIS in reducing power consumption and provide useful insights for designing integrated sensing, communication, and power transfer systems.
%In this paper, we have proposed a STAR-RIS-assisted secure ISCPT system that integrates secure communication, sensing, and wireless energy transfer into a unified framework.
%By jointly optimizing the transmit/receive beamforming and the STAR-RIS transmission/reflection coefficients via SCA, the system is capable of simultaneously guaranteeing communication secrecy for IRs, providing sufficient energy for ERs, and ensuring precise sensing performance.
%The corresponding optimization problem is solved via SCA combined with AO.
%The simulation results illustrated that STAR-RIS substantially outperforms conventional RIS and non-RIS schemes.

\appendices{}

\section{\label{sec:proof-comm-Convex}Proof of Theorem~\ref{thm:comm-Convex}}

For a given $i\in\mathcal{I}$, and using~\eqref{eq:gamma_Ii}, we
can rewrite~\eqref{eq:P1-comm} as 
\begin{align}
 & \frac{1}{\Gamma_{\mathrm{I},i}}\big|\widehat{\mathbf{z}}_{\mathrm{I},i}\mathbf{f}_{i}\big|^{2}\nonumber \\
\geq\  & \sum\limits_{\jmath\in\mathcal{Q} \setminus \{i\}}\big|\widehat{\mathbf{z}}_{\mathrm{I},i}\mathbf{f}_{\jmath}\big|^{2}+\varsigma^{2}\sum\limits_{q\in\mathcal{Q}}\Big(\big\Vert\mathbf{f}_{q}\big\Vert^{2}+\big\Vert\bm{\Theta}_{\mathrm{I},i}\mathbf{H}\mathbf{f}_{q}\big\Vert^{2}\Big)+\sigma_{\mathrm{I},i}^{2}\nonumber \\
=\  & \sum\nolimits_{\jmath\in\mathcal{Q} \setminus \{i\}}\big|\widehat{\mathbf{z}}_{\mathrm{I},i}\mathbf{f}_{\jmath}\big|^{2}+\sigma_{\mathrm{I},i}^{2}\nonumber \\
 & +\varsigma^{2}\sum\nolimits_{q\in\mathcal{Q}}\Big(\big\Vert\mathbf{f}_{q}\big\Vert^{2}+\sum\nolimits_{\varkappa\in\mathcal{M}_{\mathrm S}}|\theta_{\mathrm{I},i,\varkappa}\mathbf{h}_{\varkappa}\mathbf{f}_{q}|^{2}\Big),\label{eq:A-1}
\end{align}
where $\mathbf{h}_{\varkappa}\in\mathbb{C}^{1\times M_{\mathrm{T}}}$
is the $\varkappa$-th row of $\mathbf{H}$. Due to the bilinear coupling among the optimization variables, the functions on both sides of~\eqref{eq:A-1} are, in general, neither convex nor concave, so that the inequality defines a nonconvex constraint. To render this constraint amenable to the SCA framework, we first construct a locally tight concave lower bound (minorant) for the LHS of~\eqref{eq:A-1} as follows:
\begin{align}
 & \frac{1}{\Gamma_{\mathrm{I},i}}\big|\widehat{\mathbf{z}}_{\mathrm{I},i}\mathbf{f}_{i}\big|^{2}\geq\frac{1}{\Gamma_{\mathrm{I},i}}\Big[2\Re\big\{ \big(a_{\mathrm{I},i}^{(\upsilon)}\big)^{\mathsf{H}} \widehat{\mathbf{z}}_{\mathrm{I},i}\mathbf{f}_{i}\big\}-\big|a_{\mathrm{I},i}^{(\upsilon)}\big|^{2}\Big]\nonumber \\
=\  & \frac{1}{\Gamma_{\mathrm{I},i}}\Big[ 0.5\big\{\big\Vert a_{\mathrm{I},i}^{(\upsilon)}\widehat{\mathbf{z}}_{\mathrm{I},i}^{\mathsf{H}}+\mathbf{f}_{i}\big\Vert^{2}-\big\Vert a_{\mathrm{I},i}^{(\upsilon)}\widehat{\mathbf{z}}_{\mathrm{I},i}^{\mathsf{H}}-\mathbf{f}_{i}\big\Vert^{2}\big\}-\big|a_{\mathrm{I},i}^{(\upsilon)}\big|^{2}\Big] \nonumber \\
\geq\  & \frac{1}{\Gamma_{\mathrm{I},i}}\Big[\Re\Big\{\left(\mathbf{b}_{\mathrm{I},i}^{(\upsilon)}\right)^{\mathsf{H}}\big[a_{\mathrm{I},i}^{(\upsilon)}\widehat{\mathbf{z}}_{\mathrm{I},i}^{\mathsf{H}}+\mathbf{f}_{i}\big]\Big\}-0.5\big\Vert\mathbf{b}_{\mathrm{I},i}^{(\upsilon)}\big\Vert^{2}\nonumber \\
 & \qquad\qquad\qquad\qquad- 0.5\big\Vert a_{\mathrm{I},i}^{(\upsilon)}\widehat{\mathbf{z}}_{\mathrm{I},i}^{\mathsf{H}}-\mathbf{f}_{i}\big\Vert^{2}-\big|a_{\mathrm{I},i}^{(\upsilon)}\big|^{2}\Big]\nonumber \\
\triangleq\  & \frac{1}{\Gamma_{\mathrm{I},i}}\psi_{\mathrm{I},i}\big(\mathbf{f}_{i},\bm{\theta};\mathbf{f}_{i}^{(\upsilon)},\bm{\theta}^{(\upsilon)}\big),\label{eq:A-2}
\end{align}
where $a_{\mathrm{I},i}^{(\upsilon)}\triangleq\widehat{\mathbf{z}}_{\mathrm{I},i}^{(\upsilon)}\mathbf{f}_{i}^{(\upsilon)}$ and 
$\mathbf{b}_{\mathrm{I},i}^{(\upsilon)}\triangleq a_{\mathrm{I},i}^{(\upsilon)}\big(\widehat{\mathbf{z}}_{\mathrm{I},i}^{(\upsilon)}\big)^{\mathsf{H}}+\mathbf{f}_{i}^{(\upsilon)}$. 
The first inequality is obtained by replacing $\big|\widehat{\mathbf{z}}_{\mathrm{I},i}\mathbf{f}_{i}\big|^{2}$ with its first-order (affine) approximation around $a_{\mathrm{I},i}^{(\upsilon)}$, while the equality follows from the standard identity $\Re\big\{\mathbf{x}^{\mathsf{H}}\mathbf{y}\big\}= \frac{1}{4}\big(\|\mathbf{x}+\mathbf{y}\|^{2}-\|\mathbf{x}-\mathbf{y}\|^{2}\big)$. However, the RHS of this equality is still non-concave due to the presence of the convex term 
$\big\Vert a_{\mathrm{I},i}^{(\upsilon)}\widehat{\mathbf{z}}_{\mathrm{I},i}^{\mathsf{H}}+\mathbf{f}_{i}\big\Vert^{2}$. 
To obtain a concave minorant, we further linearize 
$\big\Vert a_{\mathrm{I},i}^{(\upsilon)}\widehat{\mathbf{z}}_{\mathrm{I},i}^{\mathsf{H}}+\mathbf{f}_{i}\big\Vert^{2}$ 
via its first-order Taylor expansion around $\mathbf{b}_{\mathrm{I},i}^{(\upsilon)}$ in the second inequality, which yields the concave lower bound $\psi_{\mathrm{I},i}\big(\mathbf{f}_{i},\bm{\theta};\mathbf{f}_{i}^{(\upsilon)},\bm{\theta}^{(\upsilon)}\big)$.
Next, we observe that the terms $\big|\widehat{\mathbf{z}}_{\mathrm{I},i}\mathbf{f}_{\jmath}\big|^{2}$
and $\big|\theta_{\mathrm{I},i,\varkappa}\mathbf{h}_{\varkappa}\mathbf{f}_{q}\big|^{2}$
on the RHS of~\eqref{eq:A-1} are nonconvex quadratic
functions, since they depend bilinearly on the optimization variables.
As a result, the RHS of~\eqref{eq:A-1} is neither convex
nor concave. To obtain a tractable conservative approximation compatible
with the SCA framework, we construct a convex upper bound for the
RHS of~\eqref{eq:A-1}. In particular, we replace~\eqref{eq:A-1}
with the following set of inequalities:
\begin{subequations}
\label{eq:A-3}
\begin{align}
 & \frac{1}{\Gamma_{\mathrm{I},i}}\psi_{\mathrm{I},i}\big(\mathbf{f}_{i},\bm{\theta};\mathbf{f}_{i}^{(\upsilon)},\bm{\theta}^{(\upsilon)}\big)\geq\sum\limits_{\jmath\in\mathcal{Q} \setminus \{i\}}\big(\tau_{\mathrm{I},\jmath}^{2}+\bar{\tau}_{\mathrm{I},\jmath}^{2}\big) +\varsigma^{2}\nonumber \\
 & \times \sum_{q\in\mathcal{Q}}\Big[\big\Vert\mathbf{f}_{q}\big\Vert^{2}+\sum\limits_{\varkappa\in\mathcal{M}_{\mathrm S}}\big(\varphi_{\mathrm{I},i,q,\varkappa}^{2}+\bar{\varphi}_{\mathrm{I},i,q,\varkappa}^{2}\big)\Big] +\sigma_{\mathrm{I},i}^{2},\label{eq:A-3-1}\\
 & \tau_{\mathrm{I},i,\jmath}\geq|\Re\{\widehat{\mathbf{z}}_{\mathrm{I},i}\mathbf{f}_{\jmath}\}|,\forall\jmath\in\mathcal{Q} \setminus \{i\},\label{eq:A-3-2}\\
 & \bar{\tau}_{\mathrm{I},i,\jmath}\geq|\Im\{\widehat{\mathbf{z}}_{\mathrm{I},i}\mathbf{f}_{\jmath}\}|,\forall\jmath\in\mathcal{Q}\setminus \{i\},\label{eq:A-3-3}\\
 & \varphi_{\mathrm{I},i,q,\varkappa}\geq|\Re\{\theta_{\mathrm{I},i,\varkappa}\mathbf{h}_{\varkappa}\mathbf{f}_{q}\}|,\forall q\in\mathcal{Q},\varkappa\in\mathcal{M}_{\mathrm S},\label{eq:A-3-4}\\
 & \bar{\varphi}_{\mathrm{I},i,q,\varkappa}\geq|\Im\{\theta_{\mathrm{I},i,\varkappa}\mathbf{h}_{\varkappa}\mathbf{f}_{q}\}|,\forall q\in\mathcal{Q},\varkappa\in\mathcal{M}_{\mathrm S},\label{eq:A-3-5}
\end{align}
\end{subequations}
where $\tau_{\mathrm{I},\jmath}$, $\bar{\tau}_{\mathrm{I},\jmath}$,
$\varphi_{\mathrm{I},i,q,\varkappa}$, and $\bar{\varphi}_{\mathrm{I},i,q,\varkappa}$
are auxiliary slack variables introduced to decouple the nonconvex terms.
It is straightforward to verify that the LHS of~\eqref{eq:A-3-1} is jointly
concave in $(\mathbf{F},\bm{\theta})$, whereas the RHS is convex in $\mathbf{F}$
and independent of $\bm{\theta}$. By contrast, for~\eqref{eq:A-3-2}–\eqref{eq:A-3-5}
we still need to construct convex upper surrogates for the corresponding
right-hand sides. Leveraging the standard epigraph reformulation
$a \geq |b| \Leftrightarrow (a \geq b)\land(a \geq -b)$, the constraint
in~\eqref{eq:A-3-2}, for a given $\jmath \in \mathcal{Q} \setminus \{i\}$,
can be equivalently expressed as
\begin{subequations}
\label{eq:A-4}
\begin{align}
\!\!\!\tau_{\mathrm{I},i,\jmath}\geq & \Re\{\widehat{\mathbf{z}}_{\mathrm{I},i}\mathbf{f}_{\jmath}\}\!=\! 0.25\big(\big\Vert\widehat{\mathbf{z}}_{\mathrm{I},i}^{\mathsf{H}}+\mathbf{f}_{\jmath}\big\Vert^{2}-\big\Vert\widehat{\mathbf{z}}_{\mathrm{I},i}^{\mathsf{H}}-\mathbf{f}_{\jmath}\big\Vert^{2}\big),\label{eq:A-4-1}\\
\!\!\!\tau_{\mathrm{I},i,\jmath}\geq & \!-\!\Re\{\widehat{\mathbf{z}}_{\mathrm{I},i}\mathbf{f}_{\jmath}\}\!=\! 0.25\big(\big\Vert\widehat{\mathbf{z}}_{\mathrm{I},i}^{\mathsf{H}}-\mathbf{f}_{\jmath}\big\Vert^{2}-\big\Vert\widehat{\mathbf{z}}_{\mathrm{I},i}^{\mathsf{H}}+\mathbf{f}_{\jmath}\big\Vert^{2}\big).\label{eq:A-4-2}
\end{align}
\end{subequations}
However, due to the presence of negative quadratic terms, the expressions on the RHS of~\eqref{eq:A-4} constitute a difference-of-convex (DC) function and are therefore nonconvex. To obtain a tractable convex surrogate compatible with the SCA framework, we linearize the concave quadratic components via their first-order Taylor expansions around the current iterate, which yields the following set of convex inequalities:
\begin{subequations}
\label{eq:A-5}
\begin{align}
\tau_{\mathrm{I},i,\jmath}&\geq 0.25 \Big[\big\Vert\widehat{\mathbf{z}}_{\mathrm{I},i}^{\mathsf{H}}+\mathbf{f}_{\jmath}\big\Vert^{2}+\big\Vert\big(\widehat{\mathbf{z}}_{\mathrm{I},i}^{(\upsilon)}\big)^{\mathsf{H}}-\mathbf{f}_{\jmath}^{(\upsilon)}\big\Vert^{2}
\nonumber \\ & \qquad\qquad\qquad 
-2\Re\big\{\big[\widehat{\mathbf{z}}_{\mathrm{I},i}^{(\upsilon)}-\big(\mathbf{f}_{\jmath}^{(\upsilon)}\big)^{\mathsf{H}}\big]\big(\widehat{\mathbf{z}}_{\mathrm{I},i}^{\mathsf{H}}-\mathbf{f}_{\jmath}\big)\big\}\Big]
\nonumber \\ & \triangleq\zeta_{\mathrm{I},i,\jmath}\big(\mathbf{f}_{\jmath},\bm{\theta};\mathbf{f}_{\jmath}^{(\upsilon)},\bm{\theta}^{(\upsilon)}\big),\label{eq:A-5-1}\\
\tau_{\mathrm{I},i,\jmath}& \geq 0.25 \Big[\big\Vert\widehat{\mathbf{z}}_{\mathrm{I},i}^{\mathsf{H}}-\mathbf{f}_{\jmath}\big\Vert^{2}+\big\Vert\big(\widehat{\mathbf{z}}_{\mathrm{I},i}^{(\upsilon)}\big)^{\mathsf{H}}+\mathbf{f}_{\jmath}^{(\upsilon)}\big\Vert^{2}
\nonumber \\ & \qquad\qquad\qquad 
-2\Re\big\{\big[\widehat{\mathbf{z}}_{\mathrm{I},i}^{(\upsilon)}+\big(\mathbf{f}_{\jmath}^{(\upsilon)}\big)^{\mathsf{H}}\big]\big(\widehat{\mathbf{z}}_{\mathrm{I},i}^{\mathsf{H}}+\mathbf{f}_{\jmath}\big)\big\}\Big]
\nonumber \\ &
\triangleq\zeta_{\mathrm{I},i,\jmath}\big(-\mathbf{f}_{\jmath},\bm{\theta};-\mathbf{f}_{\jmath}^{(\upsilon)},\bm{\theta}^{(\upsilon)}\big).\label{eq:A-5-2}
\end{align}
\end{subequations}
% Therefore, a convex approximation to the constraint in~\eqref{eq:A-3-2}
% is given by
% \begin{align}
% \tau_{\mathrm{I},i,\jmath}\geq\zeta_{\mathrm{I},i,\jmath}\big(\pm\mathbf{f}_{\jmath},\bm{\theta};\pm\mathbf{f}_{\jmath}^{(\upsilon)},\bm{\theta}^{(\upsilon)}\big),\forall\jmath\in\mathcal{Q}/i.\label{eq:A-6}
% \end{align}

Following a similar line of arguments, one can convexify~\eqref{eq:A-3-3}–\eqref{eq:A-3-5}, for fixed $j \in \mathcal Q \setminus \{i\}$, $q \in \mathcal Q$, and $\varkappa \in \mathcal M_{\mathrm S}$
as follows: 
\begin{subequations}
	\label{eq:A-7}
	\begin{align}
		\bar{\tau}_{\mathrm{I},i,\jmath} \geq & \ \bar{\zeta}_{\mathrm{I},i,\jmath} \big(  j\mathbf{f}_{\jmath},\bm{\theta};  j\mathbf{f}_{\jmath}^{(\upsilon)},\bm{\theta}^{(\upsilon)}\big), \\
		\bar{\tau}_{\mathrm{I},i,\jmath} \geq & \ \bar{\zeta}_{\mathrm{I},i,\jmath} \big( - j\mathbf{f}_{\jmath},\bm{\theta};  -j\mathbf{f}_{\jmath}^{(\upsilon)},\bm{\theta}^{(\upsilon)}\big), \\
		\varphi_{\mathrm I, i, q, \varkappa} \geq & \ \mu_{\mathrm{I},i,q,\varkappa} \big(\mathbf{f}_{q},\bm{\theta}; \mathbf{f}_{q}^{(\upsilon)}\!,\bm{\theta}^{(\upsilon)} \big), \\
		\varphi_{\mathrm I, i, q, \varkappa} \geq & \ \mu_{\mathrm{I},i,q,\varkappa} \big(-\mathbf{f}_{q},\bm{\theta};-\mathbf{f}_{q}^{(\upsilon)}\!,\bm{\theta}^{(\upsilon)} \big), \\
		\bar{\varphi}_{\mathrm I, i, q, \varkappa} \geq & \ \bar{\mu}_{\mathrm{I},i,q,\varkappa} \big(\mathbf{f}_{q},\bm{\theta}; \mathbf{f}_{q}^{(\upsilon)}\!,\bm{\theta}^{(\upsilon)} \big), \\
		\bar{\varphi}_{\mathrm I, i, q, \varkappa} \geq & \ \bar{\mu}_{\mathrm{I},i,q,\varkappa} \big(-\mathbf{f}_{q},\bm{\theta};-\mathbf{f}_{q}^{(\upsilon)}\!,\bm{\theta}^{(\upsilon)} \big),
	\end{align}
\end{subequations}
where $\bar{\zeta}_{\mathrm{I},i,\jmath} \big(  j\mathbf{f}_{\jmath},\bm{\theta};  j\mathbf{f}_{\jmath}^{(\upsilon)},\bm{\theta}^{(\upsilon)}\big) = 0.25\Big[\big\Vert\widehat{\mathbf{z}}_{\mathrm{I},i}^{\mathsf{H}}- j\mathbf{f}_{\jmath}\big\Vert^{2}+\big\Vert\big(\widehat{\mathbf{z}}_{\mathrm{I},i}^{(\upsilon)}\big)^{\mathsf{H}}+ j\mathbf{f}_{\jmath}^{(\upsilon)}\big\Vert^{2}
-2\Re\big\{\big[\widehat{\mathbf{z}}_{\mathrm{I},i}^{(\upsilon)}- j\big(\mathbf{f}_{\jmath}^{(\upsilon)}\big)^{\mathsf{H}}\big]\big(\widehat{\mathbf{z}}_{\mathrm{I},i}^{\mathsf{H}}+ j\mathbf{f}_{\jmath}\big)\big\}\Big]$, 
%=======
%=======
$\bar{\zeta}_{\mathrm{I},i,\jmath} \big( - j\mathbf{f}_{\jmath},\bm{\theta};  -j\mathbf{f}_{\jmath}^{(\upsilon)},\bm{\theta}^{(\upsilon)}\big) =  0.25 \Big[\big\Vert\widehat{\mathbf{z}}_{\mathrm{I},i}^{\mathsf{H}}+ j\mathbf{f}_{\jmath}\big\Vert^{2}+\big\Vert\big(\widehat{\mathbf{z}}_{\mathrm{I},i}^{(\upsilon)}\big)^{\mathsf{H}}- j\mathbf{f}_{\jmath}^{(\upsilon)}\big\Vert^{2}-2\Re\big\{\big[\widehat{\mathbf{z}}_{\mathrm{I},i}^{(\upsilon)}+ j\big(\mathbf{f}_{\jmath}^{(\upsilon)}\big)^{\mathsf{H}}\big]\big(\widehat{\mathbf{z}}_{\mathrm{I},i}^{\mathsf{H}}- j\mathbf{f}_{\jmath}\big)\big\}\Big]$, 
%=======
%=======
$\mu_{\mathrm{I},i,q,\varkappa} \big(\mathbf{f}_{q},\bm{\theta}; \mathbf{f}_{q}^{(\upsilon)}\!,\bm{\theta}^{(\upsilon)} \big) = 0.25\Big[\big\Vert(\theta_{\mathrm{I},i,\varkappa}\mathbf{h}_{\varkappa})^{\mathsf{H}}+\mathbf{f}_{q}\big\Vert^{2}+\big\Vert(\theta_{\mathrm{I},i,\varkappa}^{(\upsilon)}\mathbf{h}_{\varkappa})^{\mathsf{H}}-\mathbf{f}_{q}^{(\upsilon)}\big\Vert^{2}-2\Re\big\{\big[\theta_{\mathrm{I},i,\varkappa}^{(\upsilon)}\mathbf{h}_{\varkappa}-\big(\mathbf{f}_{q}^{(\upsilon)}\big)^{\mathsf{H}}\big]\big[(\theta_{\mathrm{I},i,\varkappa}\mathbf{h}_{\varkappa})^{\mathsf{H}}-\mathbf{f}_{q}\big]\big\}\Big]$,
%=======
%=======
$\mu_{\mathrm{I},i,q,\varkappa} \big(-\mathbf{f}_{q},\bm{\theta};-\mathbf{f}_{q}^{(\upsilon)}\!,\bm{\theta}^{(\upsilon)} \big) = 0.25\Big[\big\Vert(\theta_{\mathrm{I},i,\varkappa}\mathbf{h}_{\varkappa})^{\mathsf{H}}-\mathbf{f}_{q}\big\Vert^{2}+\big\Vert(\theta_{\mathrm{I},i,\varkappa}^{(\upsilon)}\mathbf{h}_{\varkappa})^{\mathsf{H}}+\mathbf{f}_{q}^{(\upsilon)}\big\Vert^{2}-2\Re\big\{\big[\theta_{\mathrm{I},i,\varkappa}^{(\upsilon)}\mathbf{h}_{\varkappa}+\big(\mathbf{f}_{q}^{(\upsilon)}\big)^{\mathsf{H}}\big]\big[(\theta_{\mathrm{I},i,\varkappa}\mathbf{h}_{\varkappa})^{\mathsf{H}}+\mathbf{f}_{q}\big]\big\}\Big]$,
%=======
%=======
$\bar{\mu}_{\mathrm{I},i,q,\varkappa} \big(\mathbf{f}_{q},\bm{\theta}; \mathbf{f}_{q}^{(\upsilon)}\!,\bm{\theta}^{(\upsilon)} \big) = 0.25\big[\big\Vert(\theta_{\mathrm{I},i,\varkappa}\mathbf{h}_{\varkappa})^{\mathsf{H}}- j\mathbf{f}_{q}\big\Vert^{2}+\big\Vert(\theta_{\mathrm{I},i,\varkappa}^{(\upsilon)}\mathbf{h}_{\varkappa})^{\mathsf{H}}+ j\mathbf{f}_{q}^{(\upsilon)}\big\Vert^{2} -2\Re\big\{\big[\theta_{\mathrm{I},i,\varkappa}^{(\upsilon)}\mathbf{h}_{\varkappa}\!-\!j\big(\mathbf{f}_{q}^{(\upsilon)}\big)^{\mathsf{H}}\big]\big[(\theta_{\mathrm{I},i,\varkappa}^{(\upsilon)}\mathbf{h}_{\varkappa})^{\mathsf{H}}+ j\mathbf{f}_{q}\big]\big\}\Big]$, and 
%=======
%=======
$\bar{\mu}_{\mathrm{I},i,q,\varkappa} \big(-\mathbf{f}_{q},\bm{\theta};-\mathbf{f}_{q}^{(\upsilon)}\!,\bm{\theta}^{(\upsilon)} \big) = 0.25\Big[\big\Vert(\theta_{\mathrm{I},i,\varkappa}\mathbf{h}_{\varkappa})^{\mathsf{H}}+ j\mathbf{f}_{q}\big\Vert^{2}+\big\Vert(\theta_{\mathrm{I},i,\varkappa}^{(\upsilon)}\mathbf{h}_{\varkappa})^{\mathsf{H}}- j\mathbf{f}_{q}^{(\upsilon)}\big\Vert^{2}
-2\Re\big\{\big[\theta_{\mathrm{I},i,\varkappa}^{(\upsilon)}\mathbf{h}_{\varkappa}+ j\big(\mathbf{f}_{q}^{(\upsilon)}\big)^{\mathsf{H}}\big]\big[(\theta_{\mathrm{I},i,\varkappa}^{(\upsilon)}\mathbf{h}_{\varkappa})^{\mathsf{H}} \! - \! j\mathbf{f}_{q}\big]\big\}\Big]$.
It can be verified that the right-hand sides of~\eqref{eq:A-5} and~\eqref{eq:A-7} are jointly convex in $(\mathbf{F},\bm{\theta})$. Hence, by collecting the constraints in~\eqref{eq:A-3-1}, \eqref{eq:A-5}, and~\eqref{eq:A-7}, we obtain an equivalent convex reformulation of the original constraints in~\eqref{eq:P1-comm}, as stated in~\eqref{eq:P1-comm-CONVEX}, which concludes the proof.

\section{\label{sec:proof-EH-Convex}Proof of Theorem~\ref{thm:EH-Convex}}

Using~\eqref{eq:harvested_power_Ee} and~\eqref{eq:EH-normalized},
for a given $e\in\mathcal{E}$, we write 
\begin{multline}
\frac{\eta_{\mathrm{E},e}}{\mathcal{P}_{e}}\sum\nolimits_{q\in\mathcal{Q}}\Big[|\widehat{\mathbf{z}}_{\mathrm{E},e}\mathbf{f}_{q}|^{2}+\varsigma^{2}\Big(\big\Vert\mathbf{f}_{q}\big\Vert^{2}\\
+\sum\nolimits_{\varkappa\in\mathcal{M}_{\mathrm S}}|\theta_{\mathrm{E},e,\varkappa}\mathbf{h}_{\varkappa}\mathbf{f}_{q}|^{2}\Big)\Big]\geq1.\label{eq:B-1}
\end{multline}
Following the same reasoning as in~\eqref{eq:A-2}, we construct a concave minorant of the term $\big|\widehat{\mathbf{z}}_{\mathrm{E},e}\mathbf{f}_{q}\big|^{2}$, which is jointly concave in $(\mathbf{F},\bm{\theta})$, and can be written as 
\begin{align}
|\widehat{\mathbf{z}}_{\mathrm{E},e}\mathbf{f}_{q}|^{2}&\geq
\Re\Big\{\big(\mathbf{b}_{\mathrm{E},e}^{(\upsilon)}\big)^{\mathsf{H}}\big[a_{\mathrm{E},e}^{(\upsilon)}\widehat{\mathbf{z}}_{\mathrm{E},e}^{\mathsf{H}}+\mathbf{f}_{q}\big]\Big\}- \frac{1}{2}\big\Vert\mathbf{b}_{\mathrm{E},e}^{(\upsilon)}\big\Vert^{2}
\nonumber \\ & \qquad \qquad \qquad \qquad
- \frac{1}{2}\big\Vert a_{\mathrm{E},e}^{(\upsilon)}\widehat{\mathbf{z}}_{\mathrm{E},e}^{\mathsf{H}}-\mathbf{f}_{q}\big\Vert^{2}-\big|a_{\mathrm{E},e}^{(\upsilon)}\big|^{2}
\nonumber \\ &
\triangleq
\psi_{\mathrm{E},e}\big(\mathbf{f}_{q},\bm{\theta};\mathbf{f}_{q}^{(\upsilon)},\bm{\theta}^{(\upsilon)}\big),\label{eq:B-2}
\end{align}
where 
% $\psi_{\mathrm{E},e}\big(\mathbf{f}_{q},\bm{\theta};\mathbf{f}_{q}^{(\upsilon)},\bm{\theta}^{(\upsilon)}\big)\triangleq\Re\big\{\big(\mathbf{b}_{\mathrm{E},e}^{(\upsilon)}\big)^{\mathsf{H}}\big[a_{\mathrm{E},e}^{(\upsilon)}\widehat{\mathbf{z}}_{\mathrm{E},e}^{\mathsf{H}}+\mathbf{f}_{q}\big]\big\}- \frac{1}{2}\big\Vert\mathbf{b}_{\mathrm{E},e}^{(\upsilon)}\big\Vert^{2}- \frac{1}{2}\big\Vert a_{\mathrm{E},e}^{(\upsilon)}\widehat{\mathbf{z}}_{\mathrm{E},e}^{\mathsf{H}}-\mathbf{f}_{q}\big\Vert^{2}-\big|a_{\mathrm{E},e}^{(\upsilon)}\big|^{2}$,
$a_{\mathrm{E},e}^{(\upsilon)}\triangleq\widehat{\mathbf{z}}_{\mathrm{E},e}^{(\upsilon)}\mathbf{f}_{q}^{(\upsilon)}$
and $\mathbf{b}_{\mathrm{E},e}^{(\upsilon)}\triangleq a_{\mathrm{E},e}^{(\upsilon)}\big(\widehat{\mathbf{z}}_{\mathrm{E},e}^{(\upsilon)}\big)^{\mathsf{H}}+\mathbf{f}_{q}^{(\upsilon)}$.
Next, a first-order Taylor approximation of $\big\Vert\mathbf{f}_{q}\big\Vert^{2}$
can be given as 
\begin{align}
\big\Vert\mathbf{f}_{q}\big\Vert^{2}\geq2\Re\big\{\big(\mathbf{f}_{q}^{(\upsilon)}\big)^{\mathsf{H}}\mathbf{f}_{q}\big\}-\|\mathbf{f}_{q}^{(\upsilon)}\|^{2}.\label{eq:B-3}
\end{align}
At the end, similar to~\eqref{eq:A-2}, a concave lower bound on
the term $|\theta_{\mathrm{E},e,\varkappa}\mathbf{h}_{\varkappa}\mathbf{f}_{q}|^{2}$
can be given by 
\begin{align}
& |\theta_{\mathrm{E},e,\varkappa}\mathbf{h}_{\varkappa}\mathbf{f}_{q}|^{2} \nonumber \\
\geq & \ 
\Re\Big\{\big(\widetilde{\mathbf{b}}_{\mathrm{E},e,q,\varkappa}^{(\upsilon)}\big)^{\mathsf{H}}\big[\widetilde{a}_{\mathrm{E},e,q,\varkappa}^{(\upsilon)}(\theta_{\mathrm{E},e,\varkappa}\mathbf{h}_{\varkappa})^{\mathsf{H}}+\mathbf{f}_{q}\big]\Big\} -\|\widetilde{a}_{\mathrm{E},e,q,\varkappa}^{(\upsilon)}\|^{2}
\nonumber \\ 
& \qquad
- 0.5\big\Vert\widetilde{a}_{\mathrm{E},e,q,\varkappa}^{(\upsilon)}(\theta_{\mathrm{E},e,\varkappa}\mathbf{h}_{\varkappa})^{\mathsf{H}}-\mathbf{f}_{q}\big\Vert^{2}
- 0.5\big\Vert\widetilde{\mathbf{b}}_{\mathrm{E},e,q,\varkappa}^{(\upsilon)}\big\Vert^{2}
\nonumber \\ &
\triangleq
\widetilde{\psi}_{\mathrm{E},e,\varkappa}\big(\mathbf{f}_{q},\bm{\theta};\mathbf{f}_{q}^{(\upsilon)},\bm{\theta}^{(\upsilon)}\big),\label{eq:B-4}
\end{align}
where 
% $\widetilde{\psi}_{\mathrm{E},e,\varkappa}\big(\mathbf{f}_{q},\bm{\theta};\mathbf{f}_{q}^{(\upsilon)},\bm{\theta}^{(\upsilon)}\big)\triangleq\Re\big\{\big(\widetilde{\mathbf{b}}_{\mathrm{E},e,q,\varkappa}^{(\upsilon)}\big)^{\mathsf{H}}\big[\widetilde{a}_{\mathrm{E}eq\varkappa}^{(\upsilon)}(\theta_{\mathrm{E},e,\varkappa}\mathbf{h}_{\varkappa})^{\mathsf{H}}+\mathbf{f}_{q}\big]\big\}- \frac{1}{2}\big\Vert\widetilde{\mathbf{b}}_{\mathrm{E},e,q,\varkappa}^{(\upsilon)}\big\Vert^{2}- \frac{1}{2}\big\Vert\widetilde{a}_{\mathrm{E}eq\varkappa}^{(\upsilon)}(\theta_{\mathrm{E},e,\varkappa}\mathbf{h}_{\varkappa})^{\mathsf{H}}-\mathbf{f}_{q}\big\Vert^{2}-\|\widetilde{a}_{\mathrm{E},e,q,\varkappa}^{(\upsilon)}\|^{2}$,
$\widetilde{a}_{\mathrm{E},e,q,\varkappa}^{(\upsilon)}\triangleq(\theta_{\mathrm{E},e,\varkappa}^{(\upsilon)}\mathbf{h}_{\varkappa})\mathbf{f}_{q}^{(\upsilon)}$
and $\widetilde{\mathbf{b}}_{\mathrm{E},e,q,\varkappa}^{(\upsilon)}\triangleq\widetilde{a}_{\mathrm{E},e,q,\varkappa}^{(\upsilon)}(\theta_{\mathrm{E},e,\varkappa}^{(\upsilon)}\mathbf{h}_{\varkappa})^{\mathsf{H}}+\mathbf{f}_{q}^{(\upsilon)}$.
By aggregating~\eqref{eq:B-1}–\eqref{eq:B-4}, we obtain the convex representation of the energy-harvesting constraints in~\eqref{eq:P1-EH}, as given in~\eqref{eq:P1-EH-CONVEX}, which completes the proof. 

\section{\label{sec:proof-leakage-Convex}Proof of Theorem~\ref{thm:leakage-Convex}}

Using~\eqref{eq:gamma_Ee} and~\eqref{eq:P1-leakage}, for a given
$e\in\mathcal{E}$ and $i\in\mathcal{I}$, one can write
\begin{multline}
\sigma_{\mathrm{E},e}^{2}+\sum\nolimits_{\jmath\in\mathcal{Q} \setminus \{i\}}\big|\widehat{\mathbf{z}}_{\mathrm{E},e}\mathbf{f}_{\jmath}\big|^{2}\\
+\varsigma^{2}\sum\limits_{q\in\mathcal{Q}}\Big(\big\Vert\mathbf{f}_{q}\big\Vert^{2}+\big\Vert\bm{\Theta}_{\mathrm{E},e}\mathbf{H}\mathbf{f}_{q}\big\Vert^{2}\Big)\geq\frac{1}{\Gamma_{\mathrm{E},e,i}}|\widehat{\mathbf{z}}_{\mathrm{E},e}\mathbf{f}_{i}|^{2}.\label{eq:C-1}
\end{multline}
Since the preceding inequality defines a nonconvex constraint (neither side is convex or concave in the optimization variables), we construct a concave minorant for the left-hand side and a convex majorant for the right-hand side. In particular, by following the surrogate construction in Appendices~\ref{sec:proof-comm-Convex} and~\ref{sec:proof-EH-Convex}, we obtain the following tractable approximations: 
\begin{subequations}
\label{eq:C-3}
\begin{align}
 & \sigma_{\mathrm{E},e}^{2}+\sum \nolimits_{\jmath\in\mathcal{Q} \setminus \{i\}}\psi_{\mathrm{E},e}\big(\mathbf{f}_{\jmath},\bm{\theta};\mathbf{f}_{\jmath}^{(\upsilon)},\bm{\theta}^{(\upsilon)}\big)
\nonumber  \\ 
& +\varsigma^{2}\sum \limits_{q\in\mathcal{Q}}\Big[2\Re\big\{\big(\mathbf{f}_{q}^{(\upsilon)}\big)^{\mathsf{H}}\mathbf{f}_{q}\big\} + \sum\limits_{\varkappa\in\mathcal{M}_{\mathrm S}}\widetilde{\psi}_{\mathrm{E},e,\varkappa}\big(\mathbf{f}_{q},\bm{\theta};\mathbf{f}_{q}^{(\upsilon)},\bm{\theta}^{(\upsilon)}\big)
\nonumber \\ & 
\qquad \qquad \qquad  -\|\mathbf{f}_{q}^{(\upsilon)}\|^{2}\Big] \geq \frac{1}{\Gamma_{\mathrm{E},e,i}}\big(\tau_{\mathrm{E},e,i}^{2}+\bar{\tau}_{\mathrm{E},e,i}^{2}\big),\label{eq:C-3a}\\
 & \tau_{\mathrm{E},e,i}\geq|\Re\{\widehat{\mathbf{z}}_{\mathrm{E},e}\mathbf{f}_{i}\}|,\label{eq:C-3b}\\
 & \bar{\tau}_{\mathrm{E},e,i}\geq|\Im\{\widehat{\mathbf{z}}_{\mathrm{E},e}\mathbf{f}_{i}\}|,\label{eq:C-3c}
\end{align}
\end{subequations}
where $\psi_{\mathrm{E},e}(\cdot,\cdot;\cdot,\cdot)$ and $\widetilde{\psi}_{\mathrm{E},e}(\cdot,\cdot;\cdot,\cdot)$ are defined in Appendix~\ref{sec:proof-EH-Convex}, and $\tau_{\mathrm{E},e,i}$ and $\bar{\tau}_{\mathrm{E},e,i}$ are auxiliary slack variables. By construction, the LHS of~\eqref{eq:C-3a} is jointly concave in $(\mathbf{F},\bm{\theta})$, and the RHS is convex, so~\eqref{eq:C-3a} already defines a convex constraint. In contrast, the right-hand sides of~\eqref{eq:C-3b} and~\eqref{eq:C-3c} remain nonconvex due to the bilinear coupling between $\mathbf{F}$ and $\bm{\theta}$. To embed these constraints into the SCA framework, we construct locally tight convex majorants for the nonconvex terms, following the same surrogate-design principles as in Appendix~\ref{sec:proof-comm-Convex}, which yields
\begin{subequations}
	\label{eq:C-4}
	\begin{align}
		\tau_{\mathrm{E},e,i} & \geq \zeta_{\mathrm{E},e,i}\big(\mathbf{f}_{i},\bm{\theta};\mathbf{f}_{i}^{(\upsilon)},\bm{\theta}^{(\upsilon)}\big), \\
		\tau_{\mathrm{E},e,i} & \geq \zeta_{\mathrm{E},e,i}\big(-\mathbf{f}_{i},\bm{\theta};-\mathbf{f}_{i}^{(\upsilon)},\bm{\theta}^{(\upsilon)}\big), \\
		\bar{\tau}_{\mathrm{E},e,i} & \geq \bar{\zeta}_{\mathrm{E},e,i}\big( j\mathbf{f}_{i},\bm{\theta}; j\mathbf{f}_{i}^{(\upsilon)},\bm{\theta}^{(\upsilon)}\big), \\
		\bar{\tau}_{\mathrm{E},e,i} & \geq \bar{\zeta}_{\mathrm{E},e,i}\big( -j\mathbf{f}_{i},\bm{\theta}; -j\mathbf{f}_{i}^{(\upsilon)},\bm{\theta}^{(\upsilon)}\big),
	\end{align}
\end{subequations}
where $\zeta_{\mathrm{E},e,i}\big(\mathbf{f}_{i},\bm{\theta};\mathbf{f}_{i}^{(\upsilon)},\bm{\theta}^{(\upsilon)}\big) = 0.25\Big[\big\Vert\widehat{\mathbf{z}}_{\mathrm{E},e}^{\mathsf{H}}+\mathbf{f}_{i}\big\Vert^{2}+\big\Vert\big(\widehat{\mathbf{z}}_{\mathrm{E},e}^{(\upsilon)}\big)^{\mathsf{H}}+\mathbf{f}_{i}^{(\upsilon)}\big\Vert^{2}
-2\Re\big\{\big(\widehat{\mathbf{z}}_{\mathrm{E},e}^{(\upsilon)}-\big(\mathbf{f}_{i}^{(\upsilon)}\big)^{\mathsf{H}}\big)\big(\widehat{\mathbf{z}}_{\mathrm{E},e}^{\mathsf{H}}-\mathbf{f}_{i}\big)\big\}\Big]$, 
%========
%========
$\zeta_{\mathrm{E},e,i}\big(-\mathbf{f}_{i},\bm{\theta};-\mathbf{f}_{i}^{(\upsilon)},\bm{\theta}^{(\upsilon)}\big) = 0.25\Big[\big\Vert\widehat{\mathbf{z}}_{\mathrm{E},e}^{\mathsf{H}}-\mathbf{f}_{i}\big\Vert^{2}+\big\Vert\big(\widehat{\mathbf{z}}_{\mathrm{E},e}^{(\upsilon)}\big)^{\mathsf{H}}-\mathbf{f}_{i}^{(\upsilon)}\big\Vert^{2}
-2\Re\big\{\big(\widehat{\mathbf{z}}_{\mathrm{E},e}^{(\upsilon)}+\big(\mathbf{f}_{i}^{(\upsilon)}\big)^{\mathsf{H}}\big)\big(\widehat{\mathbf{z}}_{\mathrm{E},e}^{\mathsf{H}}+\mathbf{f}_{i}\big)\big\}\Big]$, 
%========
%========
$\bar{\zeta}_{\mathrm{E},e,i}\big( j\mathbf{f}_{i},\bm{\theta}; j\mathbf{f}_{i}^{(\upsilon)},\bm{\theta}^{(\upsilon)}\big) = 0.25\Big[\big\Vert\widehat{\mathbf{z}}_{\mathrm{E},e}^{\mathsf{H}}- j\mathbf{f}_{i}\big\Vert^{2}+\big\Vert\big(\widehat{\mathbf{z}}_{\mathrm{E},e}^{(\upsilon)}\big)^{\mathsf{H}}+ j\mathbf{f}_{i}^{(\upsilon)}\big\Vert^{2}
-2\Re\big\{\big(\widehat{\mathbf{z}}_{\mathrm{E},e}^{(\upsilon)}-\big( j\mathbf{f}_{i}^{(\upsilon)}\big)^{\mathsf{H}}\big)\big(\widehat{\mathbf{z}}_{\mathrm{E},e}^{\mathsf{H}}+ j\mathbf{f}_{i}\big)\big\}\Big]$, and 
%========
%========
$\bar{\zeta}_{\mathrm{E},e,i}\big( -j\mathbf{f}_{i},\bm{\theta}; -j\mathbf{f}_{i}^{(\upsilon)},\bm{\theta}^{(\upsilon)}\big) = 0.25\Big[\big\Vert\widehat{\mathbf{z}}_{\mathrm{E},e}^{\mathsf{H}}+ j\mathbf{f}_{i}\big\Vert^{2}+\big\Vert\big(\widehat{\mathbf{z}}_{\mathrm{E},e}^{(\upsilon)}\big)^{\mathsf{H}}- j\mathbf{f}_{i}^{(\upsilon)}\big\Vert^{2}
-2\Re\big\{\big(\widehat{\mathbf{z}}_{\mathrm{E},e}^{(\upsilon)}+\big( j\mathbf{f}_{i}^{(\upsilon)}\big)^{\mathsf{H}}\big)\big(\widehat{\mathbf{z}}_{\mathrm{E},e}^{\mathsf{H}}- j\mathbf{f}_{i}\big)\big\}\Big]$.
Using~\eqref{eq:C-3a} and \eqref{eq:C-4}, convex reformulation of the constraints in~\eqref{eq:P1-leakage} is given by~\eqref{eq:P1-leakage-CONVEX}. The proof is completed.

\balance

\bibliographystyle{IEEEtran}
\bibliography{references}

\end{document}